\def\dim{\hbox{{\rm dim}}}              
\def\lie#1{\mathcal{L}_{#1}}
\def\spec{\hbox{{\rm Spec}}}
\def\reduction{{/\hskip-2.5pt/\hskip-2pt}}
\def\R{{\mathbb{R}}}
\newtheorem{theorem}{Theorem}
\newtheorem{definition}{Definition}
\newtheorem{lemma}{Lemma}
\newcommand{\noi}{\noindent}
\begin{document}

\title[Nilpotent integrability and reduction]{Nilpotent integrability, reduction of dynamical systems and a third-order Calogero-Moser system}

\author{A. Ibort$^{1,2}$}\address{$^1$Depto. de Matem\'aticas, Univ. Carlos III de
Madrid, Avda. de la Universidad 30, 28911 Legan\'es, Madrid, Spain}
\address{$^2$Instituto de Ciencias Matem\'aticas, ICMAT, C/ Nicol\'as Cabrera, No 13--15, 28049 Madrid, Spain.}

\author{G. Marmo$^3$}\address{$^3$Dipartimento di Fisica dell' Universit\`a ``Federico II" di Napoli,  Sezione INFN di Napoli, Complesso Universitario di Monte S. Angelo, via Cintia, 80126 Naples, Italy}

\author{M. A. Rodr\'iguez$^4$}\address{$^3$Depto. de F\'isica Te\'orica II, Univ. Complutense de Madrid,
Avda. Complutense, s/n, 28040 Madrid, Spain.}

\author{P. Tempesta$^{2,4}$}

\email{albertoi@math.uc3m.es, marmo@na.infn.it, rodrigue@fis.ucm.es
p.tempesta@fis.ucm.es, piergiulio.tempesta@icmat.es}


\thanks{AI would like to thank the partial support by the Spanish MICIN grant
MTM 2010-21186-C02-02 and QUITEMAD+. GM wishes to thank the ``Santander-UCIIIM Excellence Chairs 2011" for supporting his stay at the University Carlos III de Madrid during the development of this research.
MAR and PT wish to thank the support by the research project FIS2011--22566, Ministerio de Ciencia e Innovaci\'{o}n, Spain.}


\maketitle

\begin{abstract}
We present an algebraic formulation of the notion of integrability of dynamical systems, based on a nilpotency property of its flow: it can be explicitly described as a polynomial on its evolution parameter.  Such a property is established in a purely geometric--algebraic language, in terms both of the algebra of all higher-order constants of the motion (named the nilpotent algebra of the dynamics), and of a maximal Abelian algebra of symmetries (called a Cartan subalgebra of the dynamics).  It is shown that this notion of integrability amounts to the annihilator of the nilpotent algebra being contained in a Cartan subalgebra of the dynamics. Systems exhibiting this property will be said to be nilpotent integrable.

Our notion of nilpotent integrability offers a new insight into the intrinsic dynamical properties of a system, which is independent of any auxiliary geometric structure defined on its phase space. At the same time, it extends in a natural way the classical concept of complete integrability for Hamiltonian systems.

An algebraic reduction procedure valid for nilpotent integrable systems, generalizing the well-known reduction procedures for symplectic and/or Poisson systems on appropriate quotient spaces, is also discussed.

In particular, it is shown that a large class of nilpotent integrable systems can be obtained by reduction of higher-order free systems.
The case of the third-order free system is analyzed and a nontrivial set of third-order Calogero-Moser-like nilpotent integrable equations is obtained.
\end{abstract}

\tableofcontents


\section{Introduction}

In this paper, a new approach to the notion of integrability of dynamical systems that extends the standard approach \`a la Liouville, and also includes such obvious ``integrable'' systems like uniformly accelerated ones, will be discussed.   For reasons that will become evident later on, this new notion of integrability will be called \textit{nilpotent integrability}.

One of the main features of nilpotent integrability is that it does not require any auxiliary underlying geometrical structure for its formulation.  In this perspective,  nilpotent integrability becomes an intrinsic property of a dynamical system, and the specific geometry associated to it comes later.   Moreover, we shall show that nilpotent integrability is consistent with the different procedures of reduction available for dynamical systems. They will allow to create new nilpotent integrable systems out of simple examples.

The main motivation behind the idea that  `integrability', as a property of a dynamical system, must be prior to any geometry associated to it, comes from a most pragmatic approach to integrability.  For instance, if we demand that a system is `integrable' if its flow could be explicitly exhibited by using a finite number of quadratures, there is no obvious geometrical content in such a definition.

As a logical consequence of this observation, it is  natural to try to reformulate the well established notion of complete integrability of Hamiltonian systems without appealing to any auxiliary geometry, such as a symplectic or a Poisson structure (see, for instance, \cite{Mm85} and references therein).  The consistency of these ideas with general reduction procedures would reflect the fact that quadratures and elimination of variables should be  compatible with each other.

To clarify the scope of the ideas leading to the notion of nilpotent integrability, let us consider first the instance of a complete integrable Hamiltonian system system $\Gamma$ from the standard perspective. Thus, if $\Gamma$ represents a complete integrable Hamiltonian system on a symplectic manifold,  introducing action-angle variables $h_j$, $\theta^j$, $j = 1, \ldots, n$, the vector field describing $\Gamma$ will look like
$$
\Gamma = \nu^j (h) \frac{\partial}{\partial \theta^j} \,.
$$
The explicit expression of its flow $\varphi_t$, given by
$$
h_j \circ \varphi  = h_j \, , \qquad \theta^j \circ \varphi_t = \theta^j + \nu^j( h) t \, ,
$$
shows that the flow $\varphi_t$, described in action-angle variables, is affine in $t$.   More important, it is trivial to check that the derivatives along the dynamics of the chosen coordinates are given by
\begin{equation}\label{comp_int_variables}
\mathcal{L}_\Gamma \theta^j = \nu^j (h) \, ; \qquad  \mathcal{L}_\Gamma h_j = 0 \, .
\end{equation}
The previous equations can also be interpreted by saying that the representation in action-angle variables of the derivation $\Gamma$ is nilpotent of order 2.     It is relevant to point out that Eqns.\eqref{comp_int_variables} can be stated as a definition of complete integrability without any reference to a symplectic structure.

Let us consider now simple examples of systems which are `integrable' in an obvious sense but that do not fit naturally in the previous scheme.  For instance, let us consider a uniformly accelerated system in $\mathbb{R}^3$:  $ \dddot{\mathbf{r}} = 0$, $\mathbf{r} \in \mathbb{R}$.

The equations of motion of such a system are described by the vector field on $(\mathbf{r} ,\mathbf{v},\mathbf{a} ) \in T^2\mathbb{R}^3$
$$
\Gamma = \mathbf{v}\cdot \frac{\partial }{\partial \mathbf{r} } + \mathbf{a} \cdot  \frac{\partial }{\partial \mathbf{v} }  \, ,
$$
with $\mathbf{a}$ a constant vector.  Its flow $\varphi_t$ has the explicit polynomial expression
$$
\mathbf{r} \circ \varphi_t = \mathbf{r} + \mathbf{v} t + \frac{1}{2}\mathbf{a} t^2 \, .
$$
Notice that the acceleration $\mathbf{a}$ is obviously a constant of the motion, while $\mathbf{v}$ and $\mathbf{r}$ are higher-order constants of the motion, that is $\mathbf{v} = \dot{\mathbf{r}}$, $\mathbf{a} = \dot{\mathbf{v}} =\ddot{\mathbf{r}}$ and $\dot{\mathbf{a}} = 0$.  Similarly,
the kinetic energy $E_1 = \frac{1}{2} \mathbf{v}^2$ is not anymore a constant of the motion. However,
$$
E_2 = \dot{E}_1 = \mathbf{v} \cdot \mathbf{a} \, , \qquad E_3 = \dot{E}_2 = \mathbf{a}^2 \, , \qquad \dot{E}_3 = 0 \, ,
$$
i.e., $E_3$ is a constant of the motion, while $E_2$, $E_1$ are higher-order constants of the motion (of orders 2 and 3 respectively). The same can be argued with respect to the tower of (higher-order) constants of the motion derived from the standard angular momentum:  $\mathbf{L}_1 = \mathbf{r} \wedge \mathbf{v}$, $ \mathbf{L}_2 =  \dot{\mathbf{L}}_1 = \mathbf{r} \wedge \mathbf{a}$, $\mathbf{L}_3 =  \dot{\mathbf{L}}_2 = \mathbf{v} \wedge \mathbf{a}$, $\dot{\mathbf{L}}_3 = 0$.

The way how the previous ideas fit with the notion of reduction can be nicely described by considering the basic example of
the completely integrable free system $\ddot{\mathbf{r}} = 0$ in $\mathbb{R}^3$.   It is easy to construct other, perhaps more interesting, completely integrable systems by reducing it using its integrals of motion.  For instance, by using the standard polar decomposition of $\mathbf{r}$,
\begin{eqnarray*}
\mathbf{r} = r \mathbf{n} \, , \qquad && \mathbf{n} \cdot \mathbf{n} = 1 \, ; \\
\dot{\mathbf{r}} = \dot{r} \mathbf{n} + r \dot{\mathbf{n}} \, , \qquad && \mathbf{n} \cdot \dot{\mathbf{n}} = 0 \, ;\\
\ddot{\mathbf{r}} = \ddot{r} \mathbf{n} + 2\dot{r} \dot{\mathbf{n}}  + r \ddot{\mathbf{n}} \, , \qquad && \mathbf{n} \cdot \ddot{\mathbf{n}} = -\dot{\mathbf{n}}^2 \, ,
\end{eqnarray*}
we may write its constants of the motion $E_1$ and $\mathbf{L}_1$ as
$$
E_1 = \frac{1}{2} \left( \dot{r}^2 + r^2 \dot{\mathbf{n}}^2 \right) \, ,\qquad l_1^2 = \mathbf{L}_1^2 = r^4 \dot{\mathbf{n}}^2 \,.
$$
Then, once we restrict to the submanifold
$$
\Sigma (k,\alpha ) = \{ (\mathbf{r}, \mathbf{v}) \in T\mathbb{R}^3 \mid \alpha l_1^2 + 2(1-\alpha ) E_1 =  k, \, 0 \leq \alpha \leq 1  \} \, ,
$$
since for our system we have $\ddot{r} = r \dot{\mathbf{n}}^2$, we get the following family of completely (and explicitly) integrable reduced second-order systems in one-dimension:
$$
\ddot{r} = \frac{\alpha l_1^2 + 2(1- \alpha)E_1-(1- \alpha) \dot{r}^2}{r(\alpha r^2 + (1-\alpha))} \, .
$$
It is immediate to generalize the previous construction to the system $\dddot{\mathbf{r}} = 0$.  Now we have the additional relations
$$
\dddot{\mathbf{r}} = \dddot{r} \mathbf{n} + 3 \ddot{r} \dot{\mathbf{n}} + 3 \dot{r} \ddot{\mathbf{n}} + r \dddot{n} \, , \qquad \mathbf{n} \cdot \dddot{\mathbf{n}} = - 3 \dot{\mathbf{n}} \cdot \ddot{\mathbf{n}} \, .
$$
Moreover, we have $\dddot{r} = 3\dot{r} \dot{\mathbf{n}}^2 - r \mathbf{n} \cdot \dddot{\mathbf{n}} $.  The towers of higher-order constants of the motion $E_1, E_2, E_3$ and $\mathbf{L}_1,  \mathbf{L}_2, \mathbf{L}_3$ before defined, allow us to write e.g. $E_2 = \dot{\mathbf{n}}\cdot \ddot{\mathbf{n}} =  \dot{r} \ddot{r} + r \dot{r}\dot{\mathbf{n}}^2 + r^2 \dot{\mathbf{n}} \cdot \ddot{\mathbf{n}}$. Hence we get the explicitly integrable third order system:
$$
\dddot{r} = 3\frac{E_2 - \dot{r}\ddot{r}}{r} \, , \qquad E_2(t) = a^2t + c \, .
$$
In other words, the reduction of systems like $\dddot{\mathbf{r}} = 0$, possessing polynomial flows, gives rise to systems that can be explicitly integrated and possess polynomial flows as well.   Additional examples of this kind will be discussed later on.

Therefore, the main idea behind the notion of nilpotent integrability is to take advantage not only of the constants of the motion, but also of the whole family of higher-order integrals of motion associated with the system.  If this family (that will be called the nilpotent algebra of the dynamical system) is large enough, then the flow of the system will be shown to be explicitly described, in terms of the integrals, as a polynomial function of time.  In such a case, the system will be said to be \textit{strictly nilpotent integrable}.   Systems obtained by reduction of the system $\dddot{\mathbf{r}} = 0$  will provide examples of the strictly nilpotent case.

However, it could also happen that the nilpotent algebra associated with a given system would not be large enough, although there exists  an Abelian algebra of symmetries of the dynamics large enough to complement it.  For instance, in the case of completely integrable systems such Abelian algebra of symmetries is the algebra of Hamiltonian vector fields defined by the action variables, whose flow parameters are the ``angle'' variables $\theta^j$.   Such systems will be called the completely nilpotent integrable ones. A particular instance of them are the standard completely integrable Hamiltonian systems to which we have referred to.

Due to the paramount role played in this analysis by the algebraic structure of the families of higher-order constants of the motion and various Lie algebras of symmetries of the system, we are led to develop the theory in an algebraic framework, finding also inspiration in the notion of \textit{algebraic integrability} introduced in \cite{La93}.

It is important to mention that the algebraic framework also allows to extend  in a meaningful way the notion of  `integrability'' to quantum systems (see for instance \cite{Cl09}, \cite{Fa10}, and the discussion in \cite{Ca14}, Chaps. 7 and 8).  In fact, the algebraic approach to the description of physical systems has gained a significant weight since the discovery of  quantum group symmetries in physical systems.   More recent developments in quantum gravity (see for instance \cite{La97}, \cite{Va06}, \cite{Ba10}) point out the need for a non--commutative description of space-time, hence emphasizing the algebraic approach to their study.  In all these contexts, the algebraic view-point instead of a set--theoretical description of physical systems is mandatory.

Hence, it is all but natural to pursue a foundation of the theory of dynamical systems from this point of view.
Even more, in doing so, as the following discussion will show, new insight and ideas can emerge naturally.

Consequently, we shall introduce the notions of strict and complete \textit{nilpotent integrability} of a dynamical system $\Gamma$ by identifying it with a derivation of an algebra (of an appropriate class) $\mathcal{F}$.  We will prove that, under adequate conditions, the flow of such system can be represented as a polynomial function of time.   As was mentioned above, our theory applies to a large class of dynamical systems, possessing no obvious geometrical structure, like a symplectic, or Poisson one, etc.   The algebraic framework will also make transparent the consistency of generalized reduction procedures (i.e. reduction with respect to subalgebras and ideals compatible with the dynamics) with the notion of nilpotent integrability.  Hence we shall prove that, under appropriate hypotheses, if a system is nilpotent integrable its reductions are nilpotent integrable too.

The formalism of \textit{differentiable spaces} is the language we have chosen to deal with the algebraic formulation of dynamics.   Differentiable spaces are one of the most useful extensions of the notion of differentiable manifolds that permit a simple algebraic description.  In fact, local models for differentiable spaces are just differentiable algebras, i.e., quotients of algebras of smooth functions on open sets in finite dimensional Euclidean spaces equipped with the strong Whitney topology.    The use of differentiable spaces will also allow to keep the formalism close to the standard description of dynamical systems on manifolds.   Another advantage of this approach is that a large class of mild singularities are automatically dealt with within this formalism. Also, some technical difficulties in dealing with local charts in the standard formalism of differentiable manifolds are overlooked.  It is also useful to mention here that the present algebraic approach in principle can be extended to the case of dynamical systems in the context of Galois differential algebras along the lines discussed for instance in \cite{Te13}.

The structure of the paper is the following. The first subsections of Section \ref{sec:algebras_etc} will be devoted to review the standard notions of dynamical systems both in the language of smooth manifolds and of differentiable spaces, making the transition as smooth as possible.    In Section \ref{sec:symmetries_etc}, we shall discuss the fundamental structures attached to a dynamical system, i.e., its algebra of symmetries and its nilpotent algebra, and we will introduce the fundamental notion of nilpotent integrability.  In Section \ref{sec:reduction_etc}, a theory of generalized reductions will be formulated in the framework of nilpotent integrability. Finally, in Section \ref{sec:examples} a family of examples leading to a Calogero-Moser higher-order nilpotent integrable system will be discussed.


\section{Algebras, derivations and dynamical systems}\label{sec:algebras_etc}

\subsection{Dynamical systems in smooth manifolds}  Let $M$ be a smooth paracompact differentiable manifold of dimension $n$.   Local coordinates on $M$ will be denoted by $x^i$, $i = 1,\ldots,n$.  Let $\mathcal{F}(M)$ be the commutative algebra of smooth functions on $M$. Hereafter we will restrict to real valued functions (although in some applications complex valued functions may also be relevant).

Dynamics on $M$ will be described by a vector field $\Gamma$, which is a derivation of the algebra $\mathcal{F}(M)$, i.e. a linear map on $\mathcal{F}(M)$ such that $\Gamma (fg) = \Gamma(f) g + f \Gamma (g)$.  The space of derivations of the algebra $\mathcal{F(M)}$, denoted in what follows by $\mathrm{Der}(\mathcal{F(M)})$, is a module over the algebra $\mathcal{F(M)}$. Moreover, it converts into a Lie algebra if equipped with the Lie bracket
\[
[Y_1,Y_2](f) = Y_1(Y_2(f)) - Y_2(Y_1(f)), \quad \forall \hspace{1mm}  Y_1,Y_2 \in \mathrm{Der}(\mathcal{F(M)}), \quad f \in \mathcal{F(M)}.
\]

The dynamics represented by $\Gamma$ will be described in two possible ways.

\noi a) As a system of autonomous first order differential equations, that in local coordinates $x^i$ takes the form:
\begin{equation}\label{dynamics}
 \frac{dx^i}{dt} = \Gamma^i(x) ,\quad i = 1, \ldots, m ,
 \end{equation}
where the vector field locally reads $\Gamma = \Gamma^i(x) \partial / \partial x^i$.

\noi b) Equivalently, as an evolution equation on $\mathcal{F}(M)$:

\begin{equation}\label{evolution1}
 \frac{df}{dt} = \Gamma (f) , \quad   f \in \mathcal{F}(M) .
\end{equation}

\noindent The local flow of $\Gamma$ will be denoted by $\varphi_t$ and satisfies:
\begin{equation}\label{local_flow1}
 \frac{d\varphi_t}{dt} = \Gamma \circ \varphi_t .
 \end{equation}
If the vector field is complete, the flow $\varphi_t$ defines a one--parameter group of automorphisms of the algebra $\mathcal{F}(M)$ (a one--parameter family of automorphisms for non--autonomous systems).
Alternatively, we can integrate formally eq. (\ref{evolution1}) to obtain:
\begin{equation}\label{formal_integration1}
\varphi_t^* f = f \circ \varphi_t = \exp (t\Gamma) f = \sum_{n = 0}^\infty  \frac{t^n}{n!}  \Gamma^n(f).
\end{equation}

\subsection{Algebras of functions and differentiable algebras}
The algebra $\mathcal{F}(M)$ of smooth functions on the manifold $M$ is equipped with the topology of uniform convergence on compact sets of the function and their derivatives to order $r$ $(r = 1,2,\ldots, \infty )$. However, for the purpose of extending the standard notions of differential calculus, it is necessary to consider the strong or Whitney topology on $\mathcal{F}(M)$, that is the topology defined by the countable family of seminorms $p_{ik}(f) = || \partial^{|k|} f / \partial x^k ||_{L^\infty (K_i)}$, where $\{  K_i \}$ is a subcovering by compact sets of a small enough open covering $\{ U_i \}$ of $M$. Thus $\mathcal{F}(M)$ becomes a Fr\'echet topological algebra, i.e., a locally multiplicative convex algebra which is metrizable and complete, where multiplicative convex means that the topology is defined by a countable family of submultiplicative seminorms.

The algebra $C^{\infty}(\mathbb{R}^n)$ is an instance of a Fr\'echet topological algebra and the model for the notion of a differentiable algebra \cite{Ma66} (see also \cite{Na03} for a recent account of the theory of differentiable algebras).

%
%

\begin{definition}
A differentiable algebra is a real Fr\'echet algebra isomorphic to $C^\infty( \mathbb{R}^m)/\mathcal{J}$, where $\mathcal{J}$ is a closed ideal of $C^\infty(\mathbb{R}^m)$.
\end{definition}
  Notice that the class of differentiable algebras is larger than the class of algebras of differentiable functions on a manifold (for instance $C^\infty (\mathbb{R})/(x^2)$ is not an algebra of differentiable functions on a smooth manifold).

 Because of Whitney's embedding theorem, any smooth paracompact manifold $M$ can be embedded into $\mathbb{R}^m$, hence $\mathcal{F}(M) \cong C^\infty(\mathbb{R}^m)/\mathcal{J}$, where $\mathcal{J}$ is the closed ideal of smooth function on $\mathbb{R}^m$ vanishing on $M$ (as a closed submanifold of it).   Hence $\mathcal{F} = C^\infty (M)$ is a differentiable algebra too.
\begin{definition}
Given a real topological algebra $\mathcal{F}$, we define its spectrum $\spec_\R (\mathcal{F})$ as the space of continuous $\R$--morphisms $\varphi \colon \mathcal{F} \to \R$ equipped with the natural topology.
\end{definition}
This space can be identified with the space of real maximal ideals of $\mathcal{F}$.

 When $M$ is a Hausdorff smooth manifold satisfying the second countability axiom, then $\spec_\R (\mathcal{F}(M)) \cong M$.   Given an element $f \in \mathcal{F}$ and a point $x \in \spec_\R\mathcal{F}$ we define the Gel'fand transform $\hat{f}(x) = x(f)$. Thus, any $f$ defines a function on $\spec_\R \mathcal{F}$. If we consider $\mathcal{F}(M)$, then $\hat{f}$ coincides with $f$.

The spectrum of a differentiable algebra is the local model for a class of spaces called $\mathcal{C}^\infty$--spaces or differentiable spaces for short, that extend in a natural way the notion of smooth manifolds (see for instance \cite{Na03} and references therein for a modern account of the foundations of the theory of differentiable spaces).

Differentiable spaces are ringed spaces over a sheaf of differentiable algebras. They allow to work with a large class of topological spaces, which share with smooth manifolds most of their theoretical formulation and the differential calculus attached to them. At the same time, differentiable spaces can exhibit singularities and other structures typically arising in the standard processes of reduction of dynamical systems. As was commented in the introduction, this is the reason  why we have chosen such a structure to introduce the notion of nilpotent integrability.   Actually, we will use in this paper the ``local'' formulation of the theory only, i.e., the notion of differentiable algebras.  However, once formulated in the case of differentiable algebras, nilpotent integrability can be extended naturally to the category of differentiable spaces.

Differentiable spaces defined as the real spectrum of a differentiable algebra are called \textit{affine differentiable spaces}. It can be proved that a differentiable space is affine if and only if it is separated. This means that at each point there is a finite local family of functions separating infinitely near points (see later, Def. 2), with bounded dimension, and its topology has a countable basis (see the ``Embedding Theorem'' at \cite{Na03}, p. 67).  Thus, any compact separated differentiable space is affine and a standard smooth manifold can be characterized as a separated, finite-dimensional differentiable space with a countable basis for its topology.

Given a differentiable algebra $\mathcal{F}$ and a point $x \in \spec_\R \mathcal{F}$, let $\mathfrak{m}_x$ be the maximal ideal  of all $f\in \mathcal{F}$ such that $\hat{f}(x) = 0$.    Also, let $\mathcal{F}_x := \mathcal{F}_{\mathfrak{m}_x}$ be the localization of $\mathcal{F}$ with respect to the multiplicative system $S_x$ defined by $\mathfrak{m}_x$, namely the set of all elements $f \notin \mathfrak{m}_x$.
\begin{definition}
We shall call the elements on $\mathcal{F}_x$ the \textit{germs} of elements $f$ of $\mathcal{F}$ at $x$.
\end{definition}
They can be considered as residue classes of elements on $\mathcal{F}$ with respect to the ideal of elements of $\mathcal{F}$ vanishing on open neighborhoods of $x$; we shall denote them by $[f]_x$.

For a differentiable algebra $\mathcal{F}$, the space of its derivations $\mathrm{Der}(\mathcal{F})$ can be equipped with its canonical Lie algebra structure $[\cdot , \cdot ]$.    Such a space is a $\mathcal{F}$--module. The space $\mathrm{Der}(\mathcal{F})^*$ of $\mathcal{F}$--valued $\mathcal{F}$--linear maps on $\mathrm{Der}(\mathcal{F})$ is the space of 1--forms of the differentiable algebra $\mathcal{F}$.   The space  $\mathrm{Der}(\mathcal{F})^*$ is again a $\mathcal{F}$--module; we will denote it by $\Omega^1(\mathcal{F})$ (or simply by $\Omega^1$ for short).

Let $U$ be an open set in a differentiable space $X$. All objects considered so far can be localized to $U$ in a standard way. Therefore, we can consider the differentiable algebras $\mathcal{F}_U$,  $\mathrm{Der}(\mathcal{F})_{U}$,  $\Omega^1_U$, etc.    Moreover, differentiable spaces admit partitions of unity (see \cite{Na03} p. 52).

The differential map $d \colon \mathcal{F} \to \Omega^1$ is defined as $df (Y)  = Y (f)$ for any $Y\in \mathrm{Der}(\mathcal{F})$.
In the case of $\mathcal{F} = C^\infty (M)$, the space $\Omega^1$ is just the module of smooth 1--forms $\Omega^1(M)$ on the manifold $M$.

Alternatively, we can associate with any point $x \in \spec_\R \mathcal{F}$ the linear space $\mathfrak{m}_x / \mathfrak{m}_x^2$ denoted by $T_x^*X$ where $X = \spec_\R\mathcal{F}$.    Again, the space $T_x^*X$ is the localization of the module $\Omega^1$ with respect to the multiplicative system defined by $\mathfrak{m}_x$.     Then given $f \in \mathfrak{m}_x$, we have that $df(x) := [df]_x$ can be identified with the class of $f$ in $\mathfrak{m}_x/\mathfrak{m}_x^2$.  In the same vein it is possible to show that the dual space to $T_x^*X$, namely  the tangent space $T_xX$ to the differentiable space $X$  at $x$,  is just the space of derivations of the localized algebra $\mathcal{F}_x$.


\subsection{Generating sets}
\begin{definition}
Given a differentiable algebra $\mathcal{F}$, we shall say that a family of functions $g_1, \ldots, g_n$ separate points infinitely near  to $x \in X = \spec_\R\mathcal{F}$ if the differentials ${dg_1}(x), \ldots, dg_n(x)$ generate $T_x^*X$ as a linear space.
\end{definition}

\noi Equivalently, the differentials $dg_k(x)$ of the family separate elements  $v_x$ in $T_xX$, i.e., for any $v_x \neq w_x$ there is $g_k$ such that $v_x(g_k) \neq w_x(g_k)$.
\begin{definition}
We say that the family of functions $\mathcal{G} = \{ g_\alpha \}$ of the differentiable algebra $\mathcal{F}$ is a differential generating set if they separate points in $X = \spec_\R \mathcal{F}$ and if for any $x\in X$, there is a finite subfamily $\{g_{\alpha_1}, \ldots, g_{\alpha_n}\}$ of $\mathcal{G}$ that separate infinitely near points to $x$.
\end{definition}
Notice that if $\mathcal{G}$ is a differential generating set for the differentiable algebra $\mathcal{F}$,
then for each pair of derivations $Y_1$ and $Y_2$ there exists at least one element $g_\alpha \in \mathcal{G}$ such that $Y_1 (g_\alpha ) \neq Y_2 (g_\alpha)$.

When the previous condition holds, we say that the functions $g_\alpha$ \textit{separate derivations}.
It is not hard to see that a large class of manifolds, for instance $C^\infty$ paracompact second countable manifolds, possesses generating sets.
Also, if  $\mathcal{G}$ is a generating set for the algebra $\mathcal{F}$ as an associative algebra, then it is a differential generating set for $\mathcal{F}$.

Since derivations can be localized, generating sets also separate local derivations.
This fact is used to prove the following important property of differential generating sets, which justifies their name.

\begin{lemma}\label{lemma1}  Let $M$ be a smooth manifold. A set of functions $\mathcal{G} = \{ g_\alpha \}$ of $\mathcal{F}(M)$ is a differential generating set iff
the set of 1--forms $\mathrm{d}\mathcal{G} = \{ \mathrm{d}g_\alpha \}$ generates the algebraic dual of the space of derivations $\mathrm{Der}(\mathcal{F})^*$, which coincides with the space $\Omega^1(M)$ of smooth 1--forms on $M$.
\end{lemma}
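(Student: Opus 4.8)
The plan is to reduce the statement to the classical fact that a family of sections of the cotangent bundle generates the $\mathcal{F}(M)$-module of smooth one-forms exactly when it spans every cotangent space, with ``generation'' understood in the locally finite (topological) sense natural for the Fr\'echet $\mathcal{F}(M)$-module $\Omega^1(M)$. First I would set up the dictionary supplied by the preceding paragraphs: for $X = \spec_\R \mathcal{F}(M) = M$ one has $T_x^*X = \mathfrak{m}_x/\mathfrak{m}_x^2 = T_x^*M$, and the value $\mathrm{d}g_\alpha(x) = [\mathrm{d}g_\alpha]_x$ is precisely the class of $g_\alpha - g_\alpha(x)$ in $\mathfrak{m}_x/\mathfrak{m}_x^2$. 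Under this identification, the clause in the definition of a differential generating set requiring that a finite subfamily separate points infinitely near to $x$ says exactly that $\{\mathrm{d}g_{\alpha_1}(x), \ldots, \mathrm{d}g_{\alpha_n}(x)\}$ spans $T_x^*M$. Hence the core of the lemma is the equivalence: the values $\{\mathrm{d}g_\alpha(x)\}$ span $T_x^*M$ for every $x$ if and only if $\mathrm{d}\mathcal{G}$ generates $\Omega^1(M) = \mathrm{Der}(\mathcal{F}(M))^*$ as an $\mathcal{F}(M)$-module.

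For the direction from generation to pointwise spanning I would simply evaluate. The map $\Omega^1(M) \to T_x^*M$, $\omega \mapsto \omega(x)$, is surjective, since every covector at $x$ is $\mathrm{d}f(x)$ for a suitable $f \in \mathcal{F}(M)$. If every $\omega$ is an $\mathcal{F}(M)$-combination $\omega = \sum_i f_i\,\mathrm{d}g_{\alpha_i}$, then $\omega(x) = \sum_i f_i(x)\,\mathrm{d}g_{\alpha_i}(x)$, so the $\mathrm{d}g_\alpha(x)$ span the image $T_x^*M$; finite dimensionality of $T_x^*M$ then extracts a finite spanning subfamily, which is exactly the infinitely-near-points clause.

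For the converse, from pointwise spanning to generation, the substantive work is a local-to-global argument resting on the partitions of unity available for differentiable spaces. Around each $x$ I would pick, out of a finite subfamily spanning $T_x^*M$, exactly $n = \dim M$ functions $g_{\alpha_1}, \ldots, g_{\alpha_n}$ whose differentials form a basis of $T_x^*M$; since the nonvanishing of $\mathrm{d}g_{\alpha_1}\wedge\cdots\wedge\mathrm{d}g_{\alpha_n}$ is an open condition, these remain a coframe on a neighborhood $U_x$. On $U_x$ any one-form $\omega$ is then an $\mathcal{F}(U_x)$-combination of the $\mathrm{d}g_{\alpha_i}$, obtained by inverting the coframe. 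Passing to a locally finite refinement $\{U_k\}$ with a subordinate partition of unity $\{\rho_k\}$, I would assemble $\omega = \sum_k \rho_k\,\omega$ into a locally finite combination $\omega = \sum_{k,i}(\rho_k h_i^{(k)})\,\mathrm{d}g_{\alpha_i^{(k)}}$ with globally smooth coefficients, exhibiting $\omega$ in the submodule generated by $\mathrm{d}\mathcal{G}$.

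The step I expect to be the main obstacle, and the one demanding care in the reading of the statement, is precisely the meaning of ``generates'' at this last stage. When $\mathcal{G}$ is infinite and $M$ is noncompact, the partition-of-unity construction yields only a locally finite combination, which need not be a finite $\mathcal{F}(M)$-linear combination, since a compactly supported finite combination cannot reproduce a one-form of noncompact support. One therefore reads ``generates'' in the topological sense proper to the Fr\'echet module $\Omega^1(M)$, in which locally finite sums converge; alternatively one invokes that $\Omega^1(M)$ is already finitely generated to cut down to finitely many of the $g_\alpha$. A secondary delicate point is the global point-separation requirement in the definition of a differential generating set: this constrains $\mathcal{G}$ itself rather than $\mathrm{d}\mathcal{G}$ and is not forced by pointwise spanning alone, so in the manifold setting the genuine content of the equivalence, the part that justifies the name, is the spanning and separation-of-derivations property recorded just before the lemma, with the point-separation clause supplied by the standing hypotheses on $\mathcal{G}$.
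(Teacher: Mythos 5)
Your proposal is correct and, for the substantive direction, follows the same local-to-global skeleton as the paper: reduce to the pointwise statement that the $dg_\alpha(x)$ span $T_x^*M$, solve the problem on chart domains, and patch with a partition of unity. The difference is in the local step: the paper argues by contradiction, saying that if some local $\sigma_U$ were not an $\mathcal{F}(U)$-combination of the $dg_\alpha$ then their span would be a proper subspace of $T^*U$ and a nonzero local vector field $X$ with $dg_\alpha(X)=0$ for all $\alpha$ would exist, contradicting the fact (recorded just before the lemma) that generating sets separate derivations, even locally; you instead extract a local coframe $dg_{\alpha_1},\ldots,dg_{\alpha_n}$ from a finite spanning subfamily and invert it, which is more constructive and makes the smoothness of the coefficients transparent. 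You also supply the converse implication by evaluation at a point, which the paper's proof omits entirely. Finally, your two caveats are genuine and worth keeping: the paper's closing formula $\sigma=\sigma_\alpha\, dg_\alpha$ is only a locally finite sum, so ``generates'' must be read in the topological or locally finite sense (exactly as you say); and the literal ``iff'' fails in the backward direction because generation of $\Omega^1(M)$ by $d\mathcal{G}$ does not recover the global point-separation clause in the definition of a differential generating set (e.g.\ $\{\sin x,\cos x\}$ on $\mathbb{R}$ has differentials generating $\Omega^1(\mathbb{R})$ but does not separate $0$ from $2\pi$), so the equivalence really concerns only the infinitely-near-points clause. The paper does not address either point; your reading of the statement is the defensible one.
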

\begin{proof}
Consider first an open set $U$ contained in the domain of a local chart of $M$. If there were a local 1--form $\sigma_U$ that could not be written as $\sigma_U = \sigma_{U,\alpha} dg_\alpha$, then the span of $dg_\alpha$ would be a proper subspace of $T^*U$ and it would exist a vector field $X$ lying in the annihilator of such subspace. Hence $dg_\alpha (X) = 0$ for all $\alpha$.    This argument can be made global by using partitions of the unity: for any 1--form $\sigma$ on $M$ it must exist a family of functions $\sigma_\alpha$ with compact support on $M$ such that $\sigma = \sigma_\alpha \mathrm{d} g_\alpha$.
\end{proof}

In the subsequent analysis, we shall assume the existence of a differential generating set $\mathcal{G}$ for $\mathcal{F}$.
Let us mention another direct consequence of the properties of a generating set $\mathcal{G}$. Since locally the differentials $dg_\alpha$ generate $T^*X$ with $X$ the spectrum of $\mathcal{F}$, we can extract a subset $dg_{\alpha_i}$ that is locally independent, i.e. such that $dg_{\alpha_1}\wedge \cdots \wedge dg_{\alpha_m} \neq 0$ on a open neighborhood of any given point. Therefore, we will say that the functions $g_{\alpha_k}$ define
a local coordinate system.   Later on we will use this fact to write explicit formulas in terms of subsets of generating sets.

We will introduce a notion of finiteness for algebras $\mathcal{F}$ which is fundamental in order to define our concept of integrability.

\begin{definition}
The differentiable algebra $\mathcal{F}$ is said to be of finite type if it admits a finite differential generating set $\mathcal{G} = \{ g_1, \ldots, g_N \}, N\in\mathbb{N}$.
\end{definition}

This more restrictive condition is satisfied for instance if the manifold is compact or of finite type. Indeed, in such case it can be embedded into a finite dimensional Euclidean space whose coordinate functions, restricted to the embedded manifold, provide a differential generating set.
As a consequence of the previous discussion, it can be shown that if $M$ is a smooth manifold, a differential generating set for the differentiable algebra $\mathcal{F}(M)$ provides a set of local coordinate systems, i.e., an atlas for the manifold $M$, by restricting to small enough open sets and shieving out dependent functions.

\subsection{Derivations and their flows on differentiable algebras}
 Let $\Gamma$ be a derivation of the differentiable algebra of finite type $\mathcal{F}$, hence $\mathcal{F} \cong \mathcal{C}^\infty (\R^n)/ \mathcal{J}$.    Denoting by $X$, as before, the real spectrum of $\mathcal{F}$, for each $x \in X$ the derivation $\Gamma$ defines an element $\Gamma_x \in T_xX$.  Hence, $X$ is a closed differentiable subspace of $\R^n$, and the canonical injection $i\colon X \to \R^n$ maps $\Gamma_x$ to a tangent vector $i_*\Gamma_x \in T_{i(x)} \R^n$. Moreover, we can extend the vector field $i_*\Gamma$ along $i(X)$ to a vector field $\widetilde\Gamma$ in $\R^n$.   Let $\tilde{\varphi}_t$ be the flow of $\widetilde{\Gamma}$. By construction $\tilde{\varphi}_t$ leaves $i(X)$ invariant.  Let us denote by $\varphi_t$ the restriction of $\tilde{\varphi}_t$ to $X$ (that always exists because of the universal property of closed differentiable subspaces \cite{Na03} p. 60).  We will denote by $\varphi_t$ the flow of the derivation $\Gamma$.   The flow $\varphi_t$ will act on elements $f\in\mathcal{F}$ as
$$ \varphi_t^*(f) = \tilde{\varphi}_t^*(\tilde{f}) + \mathcal{J}, \quad  \tilde{f} + \mathcal{J} = f.$$
Also, the flow $\varphi_t$ can be integrated formally by using a close analog to formula (\ref{formal_integration1}):
\begin{equation}\label{formal_integration}
\varphi_t^*(f) = \sum_{n = 0}^\infty  \frac{t^n}{n!}  \widetilde{\Gamma}^n(\tilde{f}) + \mathcal{J}.
\end{equation}
We have proved that not only  the derivations on differentiable algebras can be extended to derivations on $\mathcal{C}^\infty (\R^n)$, but also that they are continuous maps with respect to the canonical topology on $\mathcal{F}$. Indeed, they are the quotient of derivations on $\mathcal{C}^\infty (\R^n)$ preserving the closed ideal $\mathcal{J}$, and are continuous.

Finally, we observe that formulas similar to Eqs. (\ref{evolution1})-(\ref{local_flow1}) hold for derivations on differentiable algebras:

\begin{equation}\label{evolution}
 \frac{df}{dt} = \Gamma (f) , \quad   f \in \mathcal{F},
\end{equation}
with the local flow $\varphi_t$ satisfying
\begin{equation}\label{local_flow}
 \frac{d\varphi_t}{dt} = \Gamma \circ \varphi_t .
 \end{equation}

\section{The algebraic formulation of nilpotent integrability}\label{sec:symmetries_etc}

\subsection{Infinitesimal symmetries}
\begin{definition}  Let $\mathcal{L}$ be a Lie algebra.   Given a subset $\mathcal{S}\subset \mathcal{L}$, the space of infinitesimal symmetries of $\mathcal{S}$ is its commutant $\mathcal{S}'$, i.e.
$$
\mathcal{S}' = \{ \xi \in \mathcal{L} \mid [\xi , x] = 0, \forall x \in \mathcal{S} \} \, .
$$
In particular, if $\mathcal{L} = \mathrm{Der}(\mathcal{F})$ is the Lie algebra of derivations of a differentiable algebra $\mathcal{F}$, and $\mathcal{S} = \{ \Gamma \}$, with $\Gamma \in \mathrm{Der}(\mathcal{F})$, the space of infinitesimal symmetries of $\Gamma$ is the commutant of $\mathcal{S} = \{ \Gamma \}$, i.e., the set of derivations $\xi$ of $\mathcal{F}$ such that $[\Gamma, \xi ] = 0$.
\end{definition}

The space of infinitesimal symmetries of any subset $\mathcal{S}\subset \mathcal{L}$ is a Lie subalgebra of $\mathcal{L}$.

We define recursively the $n$th commutant of $\mathcal{S}$  as:
\begin{equation}
\mathcal{S}' = \{ \xi \in \mathcal{L} \mid [\xi, x ] = 0, \mathrm{\forall} \, \,  x \in \mathcal{S} \}, \quad \mathcal{S}^{(k + 1)} = (\mathcal{S}^{(k)})', \quad k \geq 1 .
\end{equation}
Notice that by definition $\mathcal{S} \subset \mathcal{S}''$ where $\mathcal{S}''$ is called the bicommutant of $\mathcal{S}$.    The following result holds.

\begin{lemma}[Stability Lemma]\label{stability1}
Let $\mathcal{L}$ be a Lie algebra and $\mathcal{S} \subset \mathcal{L}$ a subset.  Then:
\begin{equation}
\mathcal{S}\label{tower} \subset \mathcal{S}'' = \mathcal{S}^{(4)} = \mathcal{S}^{(6)} =  \cdots, \quad   \mathcal{S}' = \mathcal{S}''' = \mathcal{S}^{(5)} =  \cdots .
\end{equation}
Moreover, if $\mathcal{S}$ is Abelian, we have $\mathcal{S} \subset \mathcal{S}''  \subset \mathcal{S}'$ and the bicommutant $\mathcal{S}''$ is an Abelian Lie subalgebra of $\mathcal{L}$.
\end{lemma}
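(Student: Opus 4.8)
The plan is to treat the map $\mathcal{S} \mapsto \mathcal{S}'$ as a Galois-type closure operation and to extract the period-two stabilization from two elementary structural properties, which I would record first. \emph{Order reversal:} if $\mathcal{S}_1 \subset \mathcal{S}_2$ then $\mathcal{S}_2' \subset \mathcal{S}_1'$, since any $\xi$ commuting with every element of $\mathcal{S}_2$ a fortiori commutes with every element of the smaller set $\mathcal{S}_1$. \emph{Extensivity:} $\mathcal{S} \subset \mathcal{S}''$, because each $x \in \mathcal{S}$ satisfies $[\xi, x] = 0$ for all $\xi \in \mathcal{S}'$ (that is exactly the defining property of $\mathcal{S}'$), whence $[x, \xi] = 0$ for all $\xi \in \mathcal{S}'$ and therefore $x \in (\mathcal{S}')' = \mathcal{S}''$. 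Both facts are immediate from the definition of the commutant.

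Next I would derive the key identity $T' = T'''$, valid for an arbitrary subset $T \subset \mathcal{L}$. Applying extensivity to $T'$ gives $T' \subset (T')'' = T'''$; applying order reversal to the inclusion $T \subset T''$ gives $(T'')' \subset T'$, i.e. $T''' \subset T'$. The two inclusions yield $T' = T'''$. Since this holds for every subset, I may specialize $T = \mathcal{S}^{(k-1)}$ for each $k \geq 2$ to obtain $\mathcal{S}^{(k)} = \mathcal{S}^{(k+2)}$; equivalently, each further application of the commutant is periodic of period two. A short induction then collapses the whole tower: all odd-indexed commutants coincide with $\mathcal{S}'$ and all even-indexed ones (from the second on) coincide with $\mathcal{S}''$, giving $\mathcal{S} \subset \mathcal{S}'' = \mathcal{S}^{(4)} = \cdots$ and $\mathcal{S}' = \mathcal{S}''' = \cdots$ exactly as stated. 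The only point requiring care here is keeping the direction of the inclusions straight and matching the parity of the indices; there is no genuine analytic difficulty, since everything rests on the two formal properties above.

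Finally, for the Abelian case I would first observe that $\mathcal{S}$ Abelian means precisely $\mathcal{S} \subset \mathcal{S}'$: each $x \in \mathcal{S}$ commutes with all of $\mathcal{S}$. Order reversal applied to $\mathcal{S} \subset \mathcal{S}'$ gives $\mathcal{S}'' = (\mathcal{S}')' \subset \mathcal{S}'$, and together with extensivity this yields the chain $\mathcal{S} \subset \mathcal{S}'' \subset \mathcal{S}'$. To see that $\mathcal{S}''$ is Abelian, take $a, b \in \mathcal{S}''$; since $\mathcal{S}'' \subset \mathcal{S}'$ we have $b \in \mathcal{S}'$, and since $a \in \mathcal{S}'' = (\mathcal{S}')'$ it commutes with every element of $\mathcal{S}'$, whence $[a, b] = 0$. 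That $\mathcal{S}''$ is a Lie subalgebra is already known, being the commutant (hence the space of infinitesimal symmetries) of $\mathcal{S}'$. This completes the proposed argument; I expect no real obstacle, the entire lemma being a formal consequence of the extensivity and order reversal of the commutant map.
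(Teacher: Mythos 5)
Your argument is correct and is essentially the paper's own proof: both rest on the order-reversal and extensivity properties of the commutant, derive $T'=T'''$ from the two resulting inclusions, and handle the Abelian case by noting $\mathcal{S}\subset\mathcal{S}'$ implies $\mathcal{S}''\subset\mathcal{S}'$ and then checking commutativity of two elements of $\mathcal{S}''$ directly. The only difference is presentational — you spell out the induction collapsing the tower and the verification of extensivity, which the paper leaves implicit.
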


\begin{proof}
Observe that if $\mathcal{S}_1 \subset \mathcal{S}_2$, then $\mathcal{S}_2' \subset \mathcal{S}_1'$.  Because $\mathcal{S} \subset \mathcal{S}''$, we have $\mathcal{S}''' \subset \mathcal{S}'$.  On the other hand, since for any set $\mathcal{S} \subset \mathcal{S}''$ we have that $\mathcal{S}' \subset (\mathcal{S}')''$, we infer that $\mathcal{S}' = \mathcal{S}'''$ and relations (\ref{tower}) follow.

If $\mathcal{S}$ is Abelian, i.e., $[x,y] = 0$ for all $x,y \in \mathcal{S}$, then $\mathcal{S} \subset \mathcal{S}'$, hence $\mathcal{S}'' \subset \mathcal{S}'$.  Moreover if $\xi, \zeta \in \mathcal{S}'' \subset \mathcal{S}'$, then $\zeta \in \mathcal{S}'$, and $[\xi, \zeta] = 0$, which shows that $\mathcal{S}''$ is Abelian.
\end{proof}
\begin{definition}
Let $\Gamma$ be a derivation of the differentiable algebra $\mathcal{F}$.  We shall say that $\Gamma$ is quasi-periodic if the commutant $\mathcal{S}'_\Gamma$ is finite dimensional.
\end{definition}
Notice that if $\Gamma$ is a vector field on a compact manifold $M$, then $\Gamma$ is quasi-periodic if the flow of $\Gamma$ generates a finite dimensional torus inside the group of diffeomorphisms of $M$ whose Lie algebra is the commutant $\mathcal{S}'_\Gamma$.

However, we will be mainly interested in the ``Abelian part'' of the algebra of symmetries of $\Gamma$, that is we will consider a \textit{maximal Abelian subalgebra}  of $\mathcal{S}'$ containing $\mathcal{S}$.  When the subset $\mathcal{S}$ is just $\Gamma$, we shall denote such a maximal Abelian subalgebra by $\mathfrak{h}_\Gamma$.  Hence, $\mathfrak{h}_\Gamma$ is a Cartan subalgebra of $\mathcal{S}'$ containing $\Gamma$ only; consequently, it will be called a Cartan subalgebra associated to the dynamics $\Gamma$.

More can be said when the Lie algebra $\mathcal{L}$ is represented in terms of an algebra of derivations of a given algebra $\mathcal{F}$ (or of automorphisms of an $\mathcal{F}$-module).   Let $\rho\colon \mathcal{L} \to \mathrm{Der}(\mathcal{F})$ be a morphism of Lie algebras.  To each element $\xi \in \mathcal{L}$ we associate a derivation $\rho (\xi)$, satisfying $\rho([\xi, \zeta]) = [\rho (\xi), \rho (\zeta)]$.  We will analyse this situation in the following sections.

\subsection{Constants of the motion}

\begin{definition}
Given the dynamics $\Gamma$ on $\mathcal{F}$, the subalgebra of constants of the motion of $\Gamma$, denoted by $\mathcal{C}(\Gamma)$ (or simply by $\mathcal{C}$) is defined by
\begin{equation}\label{constants_motion}
\mathcal{C}(\Gamma) = \{  f\in \mathcal{F}  \mid \Gamma (f) = 0  \}.
\end{equation}
\end{definition}

More generally,  as in the previous section we can consider a Lie algebra $\mathcal{L}$ represented by derivations of the algebra $\mathcal{F}$ (in particular we can consider the Lie algebra $\mathrm{Der}(\mathcal{F})$ itself with the tautological representation). Given any subset $\mathcal{S} \in \mathcal{L}$, we define the subalgebra of its constants of the motion $\mathcal{C}(\mathcal{S})$ by
\begin{equation}
\mathcal{C}(\mathcal{S}) = \{ f \in \mathcal{F} \mid \rho(x) (f) = 0, \mathrm{\forall} \, \, x \in \mathcal{S} \}.
\end{equation}

\noi We will use a notation reminiscent of the one used in the previous section by denoting the subalgebra of constants of the motion by $\mathcal{S}_{\rho}' \subset \mathcal{F}$ and calling it the ``commutant'' of $\mathcal{S}$ in $\mathcal{F}$ with respect to the representation $\rho$.  Similarly, we can define the bicommutant of $\mathcal{S}$ in $\mathcal{F}$ as the Lie subalgebra of $\mathcal{L}$, denoted as $\mathcal{S}_{\rho}''$, of elements $\xi \in \mathcal{L}$ such that $\rho (\xi) (f) = 0$ for all $f \in \mathcal{C}(\mathcal{S}) = \mathcal{S}_{\rho}' \subset \mathcal{F}$.  We define recursively $\mathcal{S}_\mathcal{F}^{(p)} \subset \mathcal{F}$ as follows:
\begin{eqnarray}\label{Srho_commutant}
\mathcal{S}_{\rho}^{(1)} &=&  \mathcal{S}_{\rho}'  = \mathcal{C}(\mathcal{S}) \, , \qquad \mathcal{S}_{\rho}^{(2)} =  \mathcal{S}_{\rho}'' \, , \\ \nonumber
\mathcal{S}_{\rho}^{(2k+1)} & =& \{ f \in \mathcal{F} \mid  \rho(\xi ) (f) = 0, \forall \xi \in \mathcal{S}_{\rho}^{(2k)} \} \, , k \geq 1 \\  \nonumber \mathcal{S}_{\rho}^{(2k)} &=& \{ x \in \mathcal{L} \mid  \rho(\xi)(f) = 0, \forall f \in \mathcal{S}_{\rho}^{(2k-1)} \, ,  k \geq 2 \, .\}
\end{eqnarray}

\noi Then, an argument similar to that of  Lemma \ref{stability1} gives us another stability result.

\begin{lemma}\label{stability2}
The following relations hold
\begin{equation}
\mathcal{S} \subset \mathcal{S}_{\rho}'' = \mathcal{S}_{\rho}^{(4)} = \cdots , \quad \mathcal{C}(\mathcal{S})  =  \mathcal{S}_{\rho}'  =  \mathcal{S}_{\rho}''' = \cdots .
\end{equation}
Moreover, if $\mathcal{S}$ is an Abelian subset of $\mathcal{L}$, then $ \mathcal{S}_{\rho}'' $ is the minimal Abelian subalgebra of $\mathcal{L}$ which contains $\mathcal{S}$.
\end{lemma}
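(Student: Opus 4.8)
The plan is to recognize that the entire hierarchy $\mathcal{S}_\rho^{(k)}$ is generated by a single Galois connection between the subsets of $\mathcal{L}$ and the subsets of $\mathcal{F}$, induced by the relation $\rho(x)(f)=0$. For $\mathcal{A}\subset\mathcal{L}$ set $\Phi(\mathcal{A})=\{f\in\mathcal{F}\mid \rho(x)(f)=0,\ \forall x\in\mathcal{A}\}\subset\mathcal{F}$, and for $\mathcal{B}\subset\mathcal{F}$ set $\Psi(\mathcal{B})=\{x\in\mathcal{L}\mid \rho(x)(f)=0,\ \forall f\in\mathcal{B}\}\subset\mathcal{L}$. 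With this notation $\mathcal{S}_\rho'=\Phi(\mathcal{S})$, $\mathcal{S}_\rho''=\Psi(\Phi(\mathcal{S}))$, and in general the odd terms are alternating composites ending in $\Phi$ while the even terms end in $\Psi$. The first assertion then becomes the standard stabilization statement for the closure operators $\Psi\circ\Phi$ and $\Phi\circ\Psi$, so the Lie structure never enters.

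First I would record the two elementary properties of the pairing. (i) Both $\Phi$ and $\Psi$ are inclusion-reversing: if $\mathcal{A}_1\subset\mathcal{A}_2$, every $f$ annihilated by all of $\mathcal{A}_2$ is annihilated by all of $\mathcal{A}_1$, so $\Phi(\mathcal{A}_2)\subset\Phi(\mathcal{A}_1)$, and symmetrically for $\Psi$. (ii) Both composites are extensive: $\mathcal{A}\subset\Psi(\Phi(\mathcal{A}))$ and $\mathcal{B}\subset\Phi(\Psi(\mathcal{B}))$, each immediate from the definitions, since by construction every $x\in\mathcal{A}$ annihilates every $f\in\Phi(\mathcal{A})$. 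In particular $\mathcal{S}\subset\mathcal{S}_\rho''$, which is the first displayed inclusion.

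From (i) and (ii) the triple-equals-single identity follows formally, exactly as in the proof of Lemma \ref{stability1}: applying the antitone $\Phi$ to $\mathcal{S}\subset\Psi(\Phi(\mathcal{S}))$ gives $\Phi(\Psi(\Phi(\mathcal{S})))\subset\Phi(\mathcal{S})$, while (ii) with $\mathcal{B}=\Phi(\mathcal{S})$ gives the reverse inclusion; hence $\Phi\Psi\Phi=\Phi$, i.e. $\mathcal{S}_\rho'''=\mathcal{S}_\rho'$. Composing once more with $\Psi$ yields $\Psi\Phi\Psi\Phi=\Psi\Phi$, i.e. $\mathcal{S}_\rho^{(4)}=\mathcal{S}_\rho''$, and an immediate induction propagates both equalities up the tower, giving $\mathcal{C}(\mathcal{S})=\mathcal{S}_\rho'=\mathcal{S}_\rho'''=\cdots$ and $\mathcal{S}_\rho''=\mathcal{S}_\rho^{(4)}=\cdots$.

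The second assertion is where the real work lies. I would first check that $\mathcal{S}_\rho''=\Psi(\mathcal{C}(\mathcal{S}))$ is a Lie subalgebra: for $x,y\in\mathcal{S}_\rho''$ and $f\in\mathcal{C}(\mathcal{S})$, the morphism property $\rho([x,y])=[\rho(x),\rho(y)]$ together with $\rho(x)(f)=\rho(y)(f)=0$ gives $\rho([x,y])(f)=\rho(x)(\rho(y)(f))-\rho(y)(\rho(x)(f))=0$, so $[x,y]\in\mathcal{S}_\rho''$; and $\mathcal{S}\subset\mathcal{S}_\rho''$ is already known. The delicate point, and the step I expect to be the main obstacle, is to show that when $\mathcal{S}$ is Abelian the subalgebra $\mathcal{S}_\rho''$ is itself Abelian and is the smallest Abelian subalgebra containing $\mathcal{S}$. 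In the purely Lie-algebraic Lemma \ref{stability1} this was free, because Abelianity of $\mathcal{S}$ directly yields $\mathcal{S}\subset\mathcal{S}'$ and hence $\mathcal{S}''\subset\mathcal{S}'$, after which commutativity is automatic. In the present representation setting the commutant $\mathcal{C}(\mathcal{S})$ lives in $\mathcal{F}$, and annihilating all the constants of the motion is a priori much weaker than commuting with $\mathcal{S}$ in $\mathcal{L}$; the two are linked only through $\rho$. I would therefore expect to need the standing hypotheses on $\mathcal{F}$, namely faithfulness of $\rho$ together with $\mathcal{C}(\mathcal{S})$ being rich enough to separate derivations (via the differential generating set assumed throughout), in order to upgrade $\rho([x,y])|_{\mathcal{C}(\mathcal{S})}=0$ to $[x,y]=0$ and to identify $\mathcal{S}_\rho''$ with the minimal Abelian completion. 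Making this bridge precise, rather than the formal Galois manipulations, is the crux of the lemma.
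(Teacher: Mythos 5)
Your treatment of the first assertion is complete and is exactly what the paper intends: the paper gives no separate proof of this lemma, saying only that ``an argument similar to that of Lemma \ref{stability1}'' applies, and your Galois-connection formulation (antitone maps $\Phi$, $\Psi$ with extensive composites, hence $\Phi\Psi\Phi=\Phi$ and $\Psi\Phi\Psi\Phi=\Psi\Phi$) is precisely that argument made explicit. No issue there.

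The gap is in the second assertion, and you have correctly located it but not closed it: your proof stops at ``I would expect to need the standing hypotheses\dots to upgrade $\rho([x,y])|_{\mathcal{C}(\mathcal{S})}=0$ to $[x,y]=0$,'' which is a statement of intent, not an argument. You should be aware that this bridge cannot be built from the stated hypotheses alone, so no amount of polishing will finish the proof as the lemma is written. Concretely, take $\mathcal{F}=C^\infty(\mathbb{R}^2)$, $\mathcal{L}=\mathrm{Der}(\mathcal{F})$, $\rho=\mathrm{id}$ (so $\rho$ is faithful and a differential generating set exists), and $\mathcal{S}=\{\partial/\partial x\}$, which is Abelian. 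Then $\mathcal{C}(\mathcal{S})$ consists of the functions of $y$ alone, so $\mathcal{S}_\rho''=\{a(x,y)\,\partial/\partial x\}$; this contains both $\partial/\partial x$ and $x\,\partial/\partial x$, whose bracket is $\partial/\partial x\neq 0$. Hence $\mathcal{S}_\rho''$ is a Lie subalgebra containing $\mathcal{S}$ (your verification of that much is fine) but it is neither Abelian nor minimal. The structural reason is the one you half-identified: in Lemma \ref{stability1} Abelianity of $\mathcal{S}$ gives $\mathcal{S}\subset\mathcal{S}'$ and therefore $\mathcal{S}''\subset\mathcal{S}'$, whereas here $\mathcal{S}_\rho'$ lives in $\mathcal{F}$, and an element of $\mathcal{S}_\rho''$ need only annihilate the constants of the motion, which does not force it to commute with $\mathcal{S}$. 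The honest conclusion is that the second claim requires either a restriction to the commutant $\mathcal{S}'$ inside $\mathcal{L}$ (i.e.\ intersecting $\mathcal{S}_\rho''$ with the Lie-theoretic bicommutant of Lemma \ref{stability1}) or some additional hypothesis the paper does not state; the paper's one-line deferral to Lemma \ref{stability1} does not supply the missing step either.
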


Observe that the kernel of any derivation of a given algebra is a subalgebra of it. Since $\mathcal{C}$ is a subalgebra of $\mathcal{F}$, if  $\mathcal{M}$ is a $\mathcal{F}$--module,  $\mathcal{M}$ is a $\mathcal{C}$--module too.
In particular, $\mathcal{F}$ is a $\mathcal{C}$--module.     The space of $\mathcal{C}$--linear maps from $\mathcal{F}$ to $\mathcal{C}$ is called the space of sections of $\mathcal{F}$ over $\mathcal{C}$ (see later for the set theoretical interpretation of such set).

If $\mathcal{F}$ is the differentiable algebra of smooth functions on a manifold $M$, then the algebra of constants of motion for the derivation $\Gamma$ is closed in $\mathcal{F}$ with respect to its canonical topology, because $\Gamma$ is a continuous map.  However, it is not true in general that a closed subalgebra of a differentiable algebra is a differentiable algebra.
In what follows, we shall assume  that the algebra of constants of motion for $\Gamma$ is a differentiable algebra.   This assumption is actually satisfied by a large family of derivations, as we will see in several examples.

Let $C = \spec_\R(\mathcal{C})$ denote the real spectrum of the differentiable algebra $\mathcal{C}$. Hence the canonical inclusion morphism $i\colon \mathcal{C} \to \mathcal{F}$ induces a continuous map $\pi \colon X \to C$, such that $i = \pi^*$.  The injectivity of $i$ implies that the map $\pi$ is a surjective projection map.       Let $c \in C$ be a point in the real spectrum of $\mathcal{C}$. Then $\mathcal{F}/ \mathcal{F} \mathfrak{m}_c$ is the differentiable algebra of the fibre $\pi^{-1}(c)$ of $\pi$ over the point $c$.   Notice that the flow of $\Gamma$ leaves invariant the fibres of $\pi$ and projects to the null derivation on $C$.

\subsection{The nilpotent algebra of a derivation}

We are able now to define the sets of higher-order constants of the motion or generalized constants of the motion.
\begin{definition}
We shall say that $f\in\mathcal{F}$ is a constant of the motion of order $k$ if
\begin{equation}
\lie{\Gamma}^k(f) = 0 \qquad and \qquad \lie{\Gamma}^{k-1} f \neq 0.
\end{equation}
The number $k$ will be called the nilpotency index of $f$ with respect to $\Gamma$.  The set of higher order constants of the motion of $\Gamma$ up to the index $k$ will be denoted by
$$ \mathcal{C}^k(\Gamma) = \{ f \in \mathcal{F} \mid \lie{\Gamma}^k (f) = 0  \}. $$
\end{definition}
With this notation, $\mathcal{C}(\Gamma) \equiv \mathcal{C}^1(\Gamma)$. Observe that the sets $\mathcal{C}^k(\Gamma )$  are closed linear spaces such that $\mathcal{C}^k (\Gamma ) \subset \mathcal{C}^l (\Gamma )$
for all $k < l$.  Consequently, we can construct the following filtration:

\begin{equation} \mathcal{C} (\Gamma ) =  \mathcal{C}^1(\Gamma ) \subset \cdots \subset \mathcal{C}^k (\Gamma ) \subset \cdots \subset \mathcal{F}.\label{filtr}
\end{equation}
\begin{lemma}
The spaces $\mathcal{C}^k(\Gamma )$ are $\mathcal{C}$--modules.
\end{lemma}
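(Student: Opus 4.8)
The plan is to reduce the claim to a single closure property. Since it has already been observed that each $\mathcal{C}^k(\Gamma)$ is a closed linear space, and since $\mathcal{F}$ is itself a $\mathcal{C}$--module, the scalar multiplication $\mathcal{C}\times\mathcal{C}^k(\Gamma)\to\mathcal{F}$ is nothing but the restriction of the product in $\mathcal{F}$; all the module axioms (distributivity over sums in both arguments, associativity with respect to the product in $\mathcal{C}$, and the unital property) are then automatically inherited from the $\mathcal{C}$--module structure of $\mathcal{F}$. Thus the only thing I would actually need to verify is that the product $c\,f$ lands inside $\mathcal{C}^k(\Gamma)$ whenever $c\in\mathcal{C}$ and $f\in\mathcal{C}^k(\Gamma)$.

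The heart of the matter is the commutation identity
\[
\lie{\Gamma}^k(c\,f) = c\,\lie{\Gamma}^k(f), \qquad c\in\mathcal{C},\ f\in\mathcal{F},
\]
which I would establish by induction on $k$. For $k=1$, the Leibniz rule for the derivation $\lie{\Gamma}$ gives $\lie{\Gamma}(c\,f) = \lie{\Gamma}(c)\,f + c\,\lie{\Gamma}(f)$, and since $c\in\mathcal{C}=\mathcal{C}^1(\Gamma)$ means precisely $\lie{\Gamma}(c)=0$, the first term vanishes. For the inductive step, writing $g:=\lie{\Gamma}^k(f)\in\mathcal{F}$ and applying $\lie{\Gamma}$ once more to the $k$-th instance of the identity, the same Leibniz computation yields $\lie{\Gamma}^{k+1}(c\,f)=\lie{\Gamma}(c\,g)=c\,\lie{\Gamma}(g)=c\,\lie{\Gamma}^{k+1}(f)$, closing the induction. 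The point that makes this work is simply that multiplication by an element of $\mathcal{C}$ commutes with $\lie{\Gamma}$, hence with every power $\lie{\Gamma}^k$.

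With the identity in hand the conclusion is immediate: if $f\in\mathcal{C}^k(\Gamma)$ then $\lie{\Gamma}^k(f)=0$, whence $\lie{\Gamma}^k(c\,f)=c\cdot 0 = 0$, and therefore $c\,f\in\mathcal{C}^k(\Gamma)$. Combined with the already-noted additive (linear-space) structure, this exhibits $\mathcal{C}^k(\Gamma)$ as a $\mathcal{C}$--submodule of $\mathcal{F}$. I do not expect any genuine obstacle here: the statement is essentially a formal consequence of the Leibniz rule together with the defining property of $\mathcal{C}$, and the only mild care required lies in the bookkeeping of the induction and in observing that the ambient module axioms need not be re-checked once closure is established.
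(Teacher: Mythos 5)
Your proof is correct and rests on the same idea as the paper's: multiplication by an element of $\mathcal{C}$ commutes with $\lie{\Gamma}$, hence with all its powers. The paper reaches the conclusion in one step via the binomial expansion $\lie{\Gamma}^k(fg)=\sum_{l=0}^{k}\binom{k}{l}\lie{\Gamma}^l(f)\lie{\Gamma}^{k-l}(g)$, noting that every term vanishes when $f\in\mathcal{C}$ and $g\in\mathcal{C}^k(\Gamma)$; your induction on $k$ is just an unfolded version of the same Leibniz computation.
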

\begin{proof}
By using the binomial expansion of the derivation $\lie{\Gamma}^k$ acting on the product of two functions, we have
\begin{equation}
\lie{\Gamma}^k (fg) = \sum_{l = 0}^k \binom{k}{l}\lie{\Gamma}^l(f) \lie{\Gamma}^{k-l}(g) = 0 \label{binomial}
\end{equation}
for any $f \in \mathcal{C}$ and $g \in \mathcal{C}^k$.
It is also clear that $\lie{\Gamma}^l \mathcal{C}^k \subset \mathcal{C}^{k-l}$ for all $k > l$ and $\lie{\Gamma}^l \mathcal{C}^k = 0$ for all $l \geq k$.
\end{proof}
A central definition for the present theory is the following.
\begin{definition}
Given a dynamical system $\Gamma$ in the differentiable algebra $\mathcal{F}$, the nilpotent subspace of $\Gamma$ is the closure of the union of all generalized constants of the motion.  We will denote such set as $\mathcal{N}(\Gamma )$:
$$ \mathcal{N}(\Gamma) = \overline{\{ f \in \mathcal{F} \mid  \exists k \in \mathbb{N}, \lie{\Gamma}^k f  = 0  \} }.$$
\end{definition}

\noi The spaces $\mathcal{C}^k$ in general are not subalgebras, contrarily to the case of $\mathcal{N}$.
\begin{lemma}
The space $\mathcal{N}$ is a closed subalgebra of $\mathcal{F}$.
\end{lemma}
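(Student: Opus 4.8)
The plan is to split the statement into two logically independent steps: first, to show that the bare union $\mathcal{N}_0 = \bigcup_{k\ge 1}\mathcal{C}^k(\Gamma)$ of all generalized constants of the motion is already a subalgebra of $\mathcal{F}$; and second, to argue that passing to the closure preserves this property, because $\mathcal{F}$ is a Fr\'echet topological algebra whose multiplication is jointly continuous.

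For the first step, I would observe that $\mathcal{N}_0$ is a linear subspace, being the union of the increasing chain of linear spaces $\mathcal{C}^1(\Gamma)\subset\cdots\subset\mathcal{C}^k(\Gamma)\subset\cdots$ exhibited in the filtration \eqref{filtr}. The only genuine content is closure under multiplication, and here I would reuse the binomial identity \eqref{binomial}. Given $f\in\mathcal{C}^k$ and $g\in\mathcal{C}^l$, I claim that $fg\in\mathcal{C}^{k+l-1}$. Indeed,
$$
\lie{\Gamma}^{k+l-1}(fg)=\sum_{j=0}^{k+l-1}\binom{k+l-1}{j}\lie{\Gamma}^j(f)\,\lie{\Gamma}^{k+l-1-j}(g),
$$
and in each summand either $j\ge k$, so that $\lie{\Gamma}^j(f)=0$, or else $j\le k-1$, whence $k+l-1-j\ge l$ and $\lie{\Gamma}^{k+l-1-j}(g)=0$; thus every term vanishes and $fg\in\mathcal{C}^{k+l-1}\subset\mathcal{N}_0$. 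This establishes that $\mathcal{N}_0$ is a subalgebra.

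For the second step, I would invoke the Fr\'echet topological algebra structure of $\mathcal{F}$ recalled in Section \ref{sec:algebras_etc}: since its topology is defined by a countable family of submultiplicative seminorms, both the linear operations and the multiplication $\mathcal{F}\times\mathcal{F}\to\mathcal{F}$ are (jointly) continuous. It is then a standard fact that the closure of a subalgebra is again a subalgebra. Concretely, given $f,g\in\mathcal{N}(\Gamma)=\overline{\mathcal{N}_0}$, I would pick sequences $f_n\to f$ and $g_n\to g$ with $f_n,g_n\in\mathcal{N}_0$ (possible by metrizability of $\mathcal{F}$); then $\lambda f_n+\mu g_n\in\mathcal{N}_0$ and $f_n g_n\in\mathcal{N}_0$, and by continuity these converge to $\lambda f+\mu g$ and $fg$, which therefore lie in $\overline{\mathcal{N}_0}=\mathcal{N}(\Gamma)$. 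Since $\mathcal{N}(\Gamma)$ is closed by construction, this completes the argument.

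The step I expect to be the most delicate is precisely the continuity of multiplication needed to push the algebra structure through the closure: this is where one must genuinely use that $\mathcal{F}$ is \emph{locally multiplicatively convex} (the submultiplicativity of the seminorms), and not merely a Fr\'echet space. By contrast, the purely algebraic part based on the binomial formula is routine, and the linear-space structure of $\mathcal{N}_0$ is immediate from the nesting of the spaces $\mathcal{C}^k(\Gamma)$.
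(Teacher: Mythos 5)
Your proof is correct, and its algebraic core is the same as the paper's: both rest on the binomial expansion \eqref{binomial} of $\lie{\Gamma}^p(fg)$ and the observation that for $p$ large enough every summand contains a factor killed by a sufficiently high power of $\lie{\Gamma}$. Your version is in fact slightly sharper and more complete in two respects. First, you extract the precise index bound $fg\in\mathcal{C}^{k+l-1}$, whereas the paper only takes $p>k_1+k_2$ (and its parenthetical ``either $r$ or $s>\max\{k_1,k_2\}$'' is stated a little loosely -- the correct dichotomy is $r\ge k_1$ or $s\ge k_2$, exactly as in your argument). Second, and more substantively, the paper's proof only treats elements of the \emph{union} $\bigcup_k\mathcal{C}^k(\Gamma)$, implicitly leaving the passage to the closure to the reader; you supply that step explicitly, correctly identifying that it is the local multiplicative convexity of the Fr\'echet algebra $\mathcal{F}$ (submultiplicative seminorms, hence jointly continuous multiplication) together with metrizability that lets the subalgebra property survive the closure. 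Nothing in your argument is in error; it fills a small gap the paper glosses over.
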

\begin{proof}
Let $f,g \in \mathcal{N}$, with nilpotency indices $k_1$ and $k_2$ respectively. Then by taking $p>k_1+k_2$, all terms in the binomial expansion (\ref{binomial}) of $\lie{\Gamma}^p (fg)$ have the form $\lie{\Gamma}^r (f) \lie{\Gamma}^s (g)$, with either $r$ or $s$ $>\max\{k_1,k_2\}$. Hence, all terms in the expansion must vanish.
\end{proof}
\noi Thus we have the following structure:
$$
\mathbb{R} \subset \mathcal{C} \subset \mathcal{N} \subset \mathcal{F}  \, .
$$

As in the case of the algebra of constants of the motion, we shall assume  that the subalgebra $\mathcal{N}$ is a differentiable algebra.   This assumption is satisfied by a large class of examples and applications, as will be discussed later on.    Under this assumption, the real spectrum of $\mathcal{N}$ is an affine differentiable space, denoted by $N = \spec_\R(\mathcal{N})$.  Let us denote by $\phi\colon X \to N$ the canonical projection from $X$ to $N$. There is another natural projection $\eta \colon N \to C$ that makes commutative the following diagram, i.e., such that $\pi = \eta\circ \phi$.
\[
\begin{array}{cc} { \bfig \xymatrix{&X\ar[dr]^{\pi}\ar[r]^{\rho} & N\ar[d]^{\eta}\\   && C&}\efig} \end{array}
\]

Given a subset $\mathcal{R} \subset \mathcal{F}$, by analogy with the definition of the commutant of a subset $\mathcal{S}$ of the Lie algebra $\mathcal{L}$, we can define its commutant with respect to the representation $\rho$ of $\mathcal{L}$, i.e. we introduce $\mathcal{R}'_\rho = \{ x \in \mathcal{L} \mid \rho(x)(f) = 0, \, \forall f \in \mathcal{R} \}$.  This notion is consistent with the definition of the bicommutant of $\mathcal{S} \subset \mathcal{L}$ given in Eq. \eqref{Srho_commutant}.
In particular, if we consider $\mathcal{N} \subset \mathcal{F}$, and $\mathcal{L} = \mathrm{Der}(\mathcal{F})$, $\rho = \mathrm{id}$, then we put $\mathcal{N}_\rho'\equiv\mathcal{N}'$, with
$$
\mathcal{N}' = \{ Y \in \mathrm{Der}(\mathcal{F})  \mid Y(g) = 0, \forall g \in \mathcal{N} \} \, .
$$

\subsection{The flow of a regular derivation on a differentiable algebra}\label{sec:structure_flow}

In the rest of this section, we shall assume that all of the algebras appearing on the discussion are of finite type. Under this assumption, the filtration given in Eq. (\ref{filtr}) necessarily stabilizes, that is:
\[ \exists \,  \nu\in \mathbb{N} \qquad \mathrm{s.t}. \qquad \mathcal{C}^{\nu-1}\neq\mathcal{C}^\nu = \mathcal{C}^{\nu+1} = \mathcal{C}^{\nu+2} = \ldots
\]
The index $\nu$ will be called \textit{the index of nilpotency} of $\Gamma$.

We can construct a generating set for $\mathcal{F}$ as follows.  Let $\{h_\alpha\}$, $\alpha = 1,\ldots, r_1$ be a generating set for $\mathcal{C}$.
The $\mathcal{C}$--modules $\mathcal{N}^k = \mathcal{C}^{k}/\mathcal{C}^{k-1}$ for $k > 1$ are finitely generated. For the quotients $\mathcal{N}^k = \mathcal{C}^{k}/\mathcal{C}^{k-1}$, $k > 1$ we will choose a generating set of the form $g_{\alpha_k}^{(k)} + \mathcal{C}^{k-1}$ ($\alpha_k = 1, \ldots, r_k$), such that $\Gamma (g_{\alpha_k}^{(k)}) = g_{\alpha_k}^{(k-1)}$.   Notice that $\Gamma (\mathcal{C}^k) \subset \mathcal{C}^{k-1}$.  Hence the family $\{g_{1}^{(2)}, \ldots, g_{r_2}^{(2)}, \ldots, g_{1}^{(\nu)},\ldots, g_{r_\nu}^{(\nu)}\}$ is a generating set of $\mathcal{N}$ as a $\mathcal{C}$--module.

\begin{definition}
Given an element $g^{(k)}\in \mathcal{C}^k$, the elements $g^{(k-1)} = \Gamma(g^{(k)}), \ldots,$ $g^{(1)} = \Gamma(g^{(2)}), h = g^{(0)} = \Gamma (g^{(1)})$, will be called the \emph{descendants} of $g^{(k)}$.   The set of the descendants of an element $g \in \mathcal{N}$ will be called the tower of higher-order constants of the motion defined by $g$ and denoted by $T(g)$.

Similarly, given an element $h\in \mathcal{C}$, consider the elements $g \in \mathcal{N}$ such that $\Gamma^{k-1}(g) = h$ for some $k$.   We will call them the \emph{ascendants} of $h$.  Given $h$, the set of its ascendants will be called the tower of higher-order constants of the motion defined by $h$.
\end{definition}
Hence the generating family constructed before decomposes as the union of the towers of higher-order constants of the motion defined by the contants of the motion $h_\alpha$.   See Fig. \ref{diagram1} for a pictorial description of the structure of $\mathcal{N}$.

\begin{figure}[t]
\centering
\includegraphics[width=13cm]{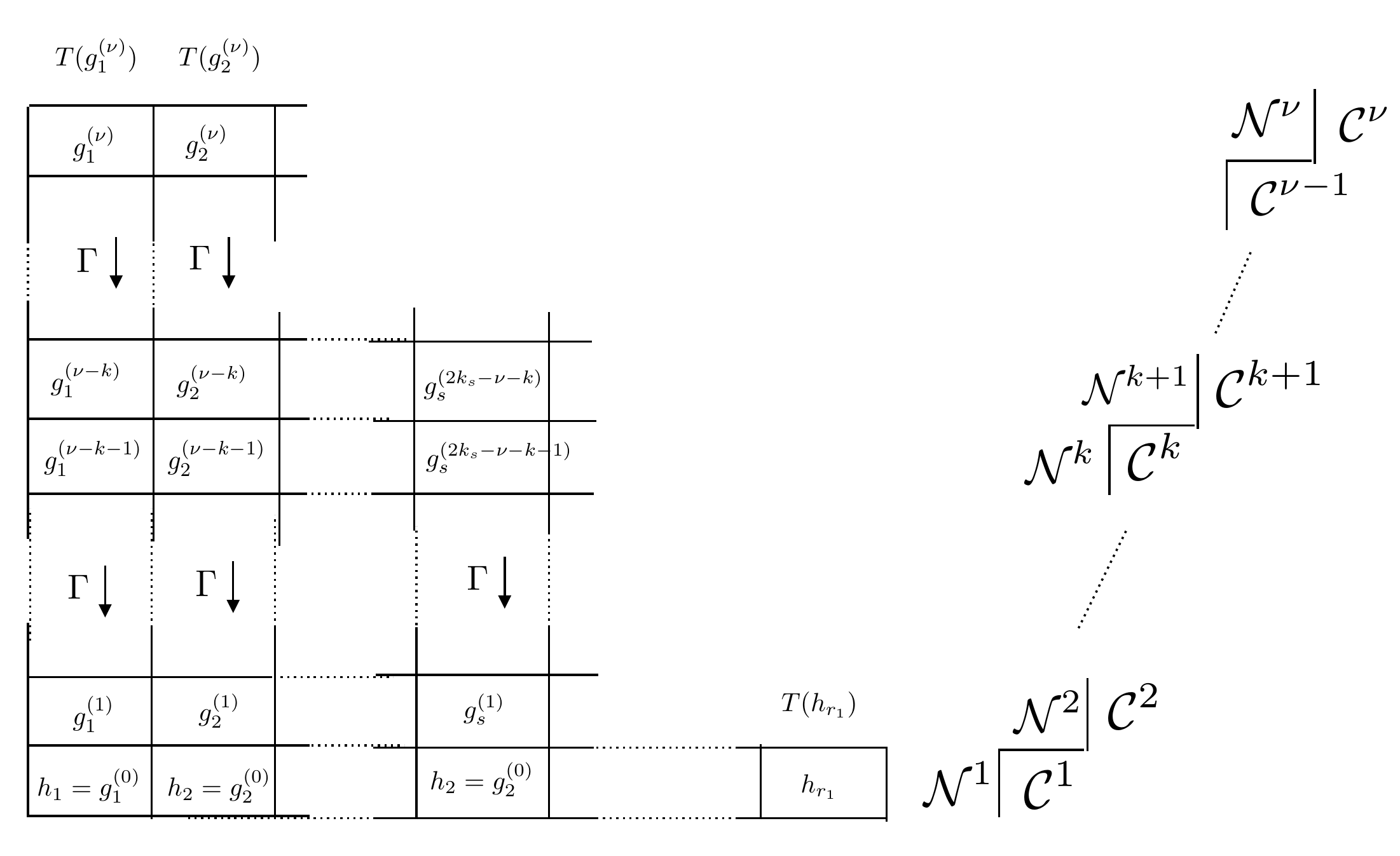}
\caption{The structure of the nilpotent algebra $\mathcal{N}$:  its towers of generalised constants of the motion and the horizontal strata}\label{diagram1}
\end{figure}

Finally, to complete a generating set for $\mathcal{F}$, we will extend the previous family by adding a family of functions $\{f_1, \ldots, f_n\}$. We will call the generating set so constructed a \textit{$\Gamma$--adapted generating set of the algebra $\mathcal{F}$}.

As proved in Lemma \ref{lemma1}, for any point $x\in X=\spec_\R (\mathcal{F})$, the differentials at $x$ of the elements of a $\Gamma$--adapted generating set span the whole cotangent bundle $T_x^{*}X$. Consequently we can extract from them a set of ``local coordinates" for $X$ in a neighborhood of $x$. A local coordinate set around the point $x$ has the form $\{h_1, \ldots, h_{r_1'}$, $g_{1}^{(2)}, \ldots, g_{r'_2}^{(2)}, \ldots, g_{1}^{(\nu)},\ldots, g_{r'_\nu}^{(\nu)}$, $f_1, \ldots, f_{n'}\}$.    We shall call such a local coordinate system a \textit{$\Gamma$--adapted local coordinate system}.

The number of elements $r_1'$, $r_2'$, etc. on each class of an adapted local coordinate system is locally constant, and therefore is constant on each connected component of $X$.  Notice that $r_1' + r_2' + \cdots + r_\nu' + n' = m = \dim X$,

In what follows, we will use the same notation both for the number $r_1$  of generators of $\mathcal{C}$ and for the number $r_1'$ of constants of the motion of an adapted coordinate system, and similarly for $r_2$ and $r_2'$, etc., even though on each class of coordinates of a $\Gamma$--adpated coordinate system $r_1' \leq r_1$, $r'_2 \leq r_2$, etc. With this convention, we have that $r_1 + r_2 + \cdots + r_\nu + n = m = \dim X$.

Due to the fact that $\lie{\Gamma} h_\alpha = 0$, for $\alpha = 1, \ldots, r_1$, then $\varphi^*_t h_\alpha = h_\alpha$ for all $t$.   We also have that $h_\alpha (\varphi_t(x_0)) = h_\alpha (x_0)$ for some given initial data $x_0$; we denote this value by $h_{\alpha,0}$.

Let $g_\alpha^{(2)}$ be a second-order constant of the motion.  Then $\lie{\Gamma}g_\alpha^{(2)} \in \mathcal{C}$. Therefore
\[
\lie{\Gamma}g_\alpha^{(2)} = \nu_\alpha^{(2)}(h_1, \ldots, h_{r_1}).
\]
Then
\[
g_\alpha^{(2)}(\varphi_t(x_0)) = g_\alpha^{(2)}(x_0) + \nu_\alpha^{(2)}(h_{1,0}, \ldots, h_{r_1,0}) t;
\]
consequently, we have:
\begin{equation}\label{evol2}
g_\alpha^{(2)} \circ \varphi_t  =g_\alpha^{(2)} +  \nu_\alpha^{(2)}(h_{1}, \ldots, h_{r_1}) t  .
\end{equation}
For a function $g_\alpha^{(3)} \in \mathcal{C}^3$, we have that $\lie{\Gamma}  g_\alpha^{(3)} \in \mathcal{C}^2$; then the $\mathcal{C}$--module $\mathcal{C}^2$ is finitely generated and
\[
\lie{\Gamma}  g_\alpha^{(3)} = \nu_{\alpha, 1}^{(3)}(h_1, \ldots,h_{r_1}) g_{1}^{(2)} + \cdots +  \nu_{\alpha, r_2}^{(3)}(h_1, \ldots,h_{r_1}) g_{r_2}^{(2)}.
\]
Hence we get
$$g_\alpha^{(3)} \circ \varphi_t =g_\alpha^{(3)} +  \nu_{\alpha, 1}^{(3)}(h_1, \ldots,h_{r_1}) g_{1}^{(2)} + \cdots +  \nu_{\alpha, r_2}^{(3)}(h_1, \ldots,h_{r_1}) g_{r_2}^{(2)}. $$

If we choose now a $\Gamma$-adapted generating set $\{ g_\alpha^{(k)} \}$ of the algebra $\mathcal{F}$,  because of the previous arguments and of the definition of the towers of higher-order constants of the motion, we easy deduce the following relations:

\begin{eqnarray}\label{flow}
h_\alpha \circ \varphi_t &=& h_\alpha  \, , \nonumber \\
g_\alpha^{(2)} \circ \varphi_t &=& g_\alpha^{(2)} +  h_{\alpha} t \,  ,  \nonumber \\
g_\alpha^{(3)} \circ \varphi_t &=& g_\alpha^{(3)} +g_\alpha^{(2)} t + \frac{1}{2}h_{\alpha} t^2 \, , \label{flow_g2} \nonumber  \\
\dots  & & \dots \nonumber \\
g_\alpha^{(k)} \circ \varphi_t &=& g_\alpha^{(k)} +  g_\alpha^{(k-1)}t + \cdots + \frac{1}{k!}h_{\alpha} t^k \, . \end{eqnarray}

\subsection{Strict nilpotent integrability}

We propose several definitions of nilpotent integrability.   The first one is directly tied to the polynomial description of the flow associated with a dynamical system $\Gamma$, according to Eqs. \eqref{flow}.

\begin{definition}
A dynamical system $\Gamma$ on a differentiable algebra $\mathcal{F}$ is strictly nilpotent integrable if there exists a family of higher-order constants of the motion $\mathcal{G} = \{ h_1, \ldots, h_r, g_1, \ldots, g_s \} \subset \mathcal{N}$ that separates points and infinitely near points of $X = \spec_\R (\mathcal{F})$.
\end{definition}

Notice that the algebra generated by $\mathcal{G}$ is dense in $\mathcal{F}$ because it separates points and infinitely near points, hence in such a case $\mathcal{N} = \mathcal{F}$. Then the flow of $\Gamma$ has an explicit description as a polynomial of the form given by Eqs. (\ref{flow}).   In fact, because of the previous discussion, we can choose a local coordinate system around each point of the form
\[
\{h_1, \ldots, h_{r_1}, g_{1}^{(2)}, \ldots, g_{r_2}^{(2)}, \ldots, g_{1}^{(s)},\ldots, g_{r_s}^{(s)}\}
\]
and the flow of the dynamics $\Gamma$ is polynomial in these coordinates. In this situation, we will also say that the dynamics $\Gamma$ is nilpotent integrable of order $s$.

Unfortunately, the above definition of nilpotent integrability, although it provides a direct grasp of its meaning, is hard to check. It would be desirable to have a definition based on the nilpotent subalgebra $\mathcal{N}$ only.  As was already pointed out, constant functions (which are always contained in $\mathcal{N}$) fuzz the picture.  It is more appropriate to consider the algebras of functions vanishing at a given point $x_0 \in M$.   Observe that for any $h_\alpha \in \mathcal{N}$ the function $h_\alpha(x) - h_\alpha (x_0)$ vanishes at $x_0$. Then we have the following main result.

\begin{theorem}
A regular dynamics $\Gamma$ on a differentiable algebra whose real spectrum is formally smooth is nilpotent integrable if and only if  $\mathcal{F}d\mathcal{N} = \Omega^1$. \end{theorem}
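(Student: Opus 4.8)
The plan is to prove the equivalence by translating both sides into a single pointwise (spectral) condition on the differentials of $\mathcal{N}$, using the duality between $\Omega^1$ and $\mathrm{Der}(\mathcal{F})$ that formal smoothness renders perfect. The pivotal observation is that, under the canonical pairing $\langle \cdot , \cdot \rangle \colon \Omega^1 \times \mathrm{Der}(\mathcal{F}) \to \mathcal{F}$ given by $\langle df, Y\rangle = Y(f)$, the annihilator $\mathcal{N}'$ is precisely the annihilator of the submodule $\mathcal{F}d\mathcal{N} \subseteq \Omega^1$: since $\langle f\,dg, Y\rangle = f\,Y(g)$ vanishes for all $f \in \mathcal{F}$ iff $Y(g)=0$, we have $Y \in \mathcal{N}'$ iff $Y$ kills $\mathcal{F}d\mathcal{N}$. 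Thus the theorem reduces to showing, on one hand, that nilpotent integrability is the statement that $\mathcal{F}d\mathcal{N}$ has trivial annihilator, and on the other, that under formal smoothness a submodule of $\Omega^1$ with trivial annihilator must be all of $\Omega^1$.

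First I would set up the local/pointwise picture. Because $\Gamma$ is regular and $X = \spec_\R(\mathcal{F})$ is formally smooth, at each $x \in X$ the fibre $\Omega^1 \otimes_{\mathcal{F}} (\mathcal{F}/\mathfrak{m}_x) \cong T_x^*X = \mathfrak{m}_x/\mathfrak{m}_x^2$ has locally constant dimension, $\Omega^1$ is locally free, and the pairing descends to a perfect pairing $T_x^*X \times T_xX \to \R$. Reducing modulo $\mathfrak{m}_x$, the image of $\mathcal{F}d\mathcal{N}$ is the subspace $V_x = \span\{\,dg(x) : g \in \mathcal{N}\,\} \subseteq T_x^*X$, and its annihilator inside $T_xX$ is the fibre of $\mathcal{N}'$. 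A Nakayama-type argument (using that $\Omega^1$ is finitely generated and locally free, and that differentiable spaces admit partitions of unity) then yields the first reduction: $\mathcal{F}d\mathcal{N} = \Omega^1$ if and only if $V_x = T_x^*X$ for every $x$, equivalently if and only if $\mathcal{N}' = 0$.

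Next I would match this with nilpotent integrability through the $\Gamma$--adapted generating set of Section \ref{sec:structure_flow}. For the direction in which one assumes $\mathcal{F}d\mathcal{N} = \Omega^1$, the spanning $V_x = T_x^*X$ means that at each point the differentials of the towers $h_\alpha, g_\alpha^{(2)}, \ldots, g_\alpha^{(\nu)} \in \mathcal{N}$ already span $T_x^*X$, so one can extract from $\mathcal{N}$ alone a $\Gamma$--adapted local coordinate system; Eqs.~\eqref{flow} then exhibit the flow $\varphi_t$ as a polynomial in $t$ in these coordinates, which is exactly nilpotent integrability. For the converse, nilpotent integrability furnishes, around every point, such an adapted coordinate system drawn from $\mathcal{N}$; since coordinate differentials span the cotangent space, $V_x = T_x^*X$ there, and as the base point is arbitrary one recovers $\mathcal{F}d\mathcal{N} = \Omega^1$. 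The annihilator formulation quoted in the abstract is then the same condition read through the perfect pairing, namely $\mathcal{N}'$ collapsing into a Cartan subalgebra $\mathfrak{h}_\Gamma$ of the dynamics.

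The main obstacle is the passage from the pointwise spanning condition $V_x = T_x^*X$ to the global module equality $\mathcal{F}d\mathcal{N} = \Omega^1$, i.e.\ the Nakayama/partition-of-unity step: one must use formal smoothness (local freeness and constant fibre dimension of $\Omega^1$) to guarantee that a family of exact $1$--forms spanning every cotangent space generates $\Omega^1$ as an $\mathcal{F}$--module, and that the finitely many towers produced by the stabilized filtration \eqref{filtr} already suffice locally. A secondary technical point is to verify that the extracted adapted coordinates can be chosen inside the closed subalgebra $\mathcal{N}$ rather than merely inside $\mathcal{F}$, which is where regularity of $\Gamma$ and finiteness of type enter.
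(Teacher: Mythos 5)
Your proposal rests on a reduction that contains a false lemma, and it also drops half of the definition it needs to verify.

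First, the claim that ``under formal smoothness a submodule of $\Omega^1$ with trivial annihilator must be all of $\Omega^1$'' is false, so the chain $\mathcal{F}d\mathcal{N} = \Omega^1 \Leftrightarrow V_x = T_x^*X \ \forall x \Leftrightarrow \mathcal{N}' = 0$ breaks at its last link. Take $\mathcal{F} = C^\infty(\mathbb{R})$ (formally smooth) and the closed subalgebra $\mathcal{N}_0$ generated by $x^2$: then $\mathcal{F}\,d\mathcal{N}_0 = x\,\mathcal{F}\,dx$, and a derivation $Y = a(x)\,\partial/\partial x$ annihilates this submodule iff $x\,a(x)f(x)=0$ for all $f$, which forces $a \equiv 0$ by continuity. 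So the annihilator is trivial, yet $x\,\mathcal{F}\,dx \subsetneq \Omega^1$ because its fibre at $x=0$ is $0 \subsetneq T_0^*\mathbb{R}$. The point is that triviality of the annihilator only detects the spans $V_x$ on a dense set of points, whereas the module equality $\mathcal{F}d\mathcal{N} = \Omega^1$ requires $V_x = T_x^*X$ at \emph{every} point. Your Nakayama/partition-of-unity step relating $\mathcal{F}d\mathcal{N} = \Omega^1$ to the everywhere-pointwise condition is sound, but the further identification with $\mathcal{N}' = 0$ is not, and with it goes the link you draw to the abstract's annihilator formulation; that formulation belongs to \emph{complete} nilpotent integrability (Definition~\ref{CI}), while this theorem concerns the strict notion, and the two conditions are genuinely different.

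Second, your direction ``$\mathcal{F}d\mathcal{N} = \Omega^1 \Rightarrow$ nilpotent integrable'' establishes only separation of infinitely near points. Strict nilpotent integrability demands a family $\mathcal{G} \subset \mathcal{N}$ that also separates \emph{points} of $X = \spec_\R(\mathcal{F})$: spanning every cotangent space by $\{dg(x) : g \in \mathcal{N}\}$ yields $\Gamma$--adapted local coordinates and a polynomial local flow, but it does not by itself distinguish two distinct points $x \neq y$ (functions can be local coordinates near each of two points while taking the same values at both). This is exactly the issue the paper's proof spends its final portion on: it extracts local representatives $\tilde{g}_\alpha$ of the generators on disjoint neighborhoods of $x$ and $y$ --- disjointness being available because the spectrum of an affine differentiable space is separated (\cite{Na03}, p.~67) --- and uses them to produce an element of $\mathcal{G}$ separating $x$ from $y$. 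Without this step (or a substitute) your converse is incomplete. For the other direction, note that the paper's route is shorter than yours: nilpotent integrability forces $\mathcal{N} = \mathcal{F}$, since the algebra generated by $\mathcal{G}$ is dense and $\mathcal{N}$ is closed, whence $\mathcal{F}d\mathcal{N} = \Omega^1$ with no pointwise analysis at all.
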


\begin{proof}   It is clear that if $\Gamma$ is nilpotent integrable, then $\mathcal{N} = \mathcal{F}$, hence obviously $\mathcal{F} d \mathcal{N} = \Omega^1$.   Conversely, if $\mathcal{F} d \mathcal{N} = \Omega^1$, since $\Omega^1$ defines a locally free sheaf over $X$ of bounded rank (because the spectrum of $\mathcal{F}$ is formally smooth), then there exists a finite family of elements $\beta_1, \ldots, \beta_N$ generating it.   Notice that $\beta_k = f_{kl} dg_{kl}$, $k = 1, \ldots, N$, $l = 1, \ldots r+s$, with $g_{kl} \in \mathcal{N}$.

\noi Consider the family $\mathcal{G}$ formed by the elements $g_{kl}$. Then the family of 1--forms $dg_{kl}$ generate $\Omega^1$ over $\mathcal{F}$. At the same time, the localization $d_xg_{kl}$ of the 1--forms to $x \in X$ generates $T_x^*X$ for any $x \in X = \spec_\R (\mathcal{F})$, hence they separate infinitely near points.

Moreover, let $x \in U$ with $U$ an open set containing a given point $x$ in the real spectrum of $\mathcal{F}$, then the localization $d_Ug_{kl}$ of the family of 1--forms $dg_{kl}$ to the open set $U$ generates the localization $\Omega^1(U)$ of the sheaf of 1--forms on $X$. Consequently, we can extract a family that is independent because the sheaf is locally free.   Let us denote by $dg_{\alpha}$ such a family; then there exists functions $\tilde{g}_{\alpha}$ on some open set $\tilde{U}$ in $\R^n$ where $\mathcal{C}^\infty /\mathcal{J}$ is a representation of the differentiable algebra $\mathcal{F}$, such that $U \subset \tilde{U}$ and $\tilde{g}_\alpha + \mathcal{J} = g_\alpha$.   The spectrum of an affine differentiable space is separated (\cite{Na03}, p. 67). Therefore, given points $x \neq y$ we can find open neighborhoods $U$ and $V$ of $x$ and $y$ respectively such that $U\cap V = \emptyset$. Then the local functions  $\tilde{g}_\alpha$ constructed above on $U$ and $V$  provide a function $g$ on $\mathcal{G}$ separating $x$ and $y$. Hence  $\mathcal{G} \subset \mathcal{N}$ separates points and infinitely near points and $\Gamma$ is nilpotent integrable.
\end{proof}

\subsection{Complete nilpotent integrability}

In the analysis of integrability, a preeminent role is played by \textit{completely integrable systems}.  We have seen already that angle coordinates can be treated as second-order constants of the motion.  However, in general, such coordinates do not define global functions, i.e., they are not defined on the differentiable algebra $\mathcal{F}$.   The explanation for this, as was already pointed out in the Introduction, is that angle variables should be considered in terms of an Abelian algebra of symmetries of $\Gamma$, mainly as the parameters describing the flows of a generating system for it.

In this sense, we introduce the following definition that extends the notion of strict nilpotent integrability and encompasses the notion of completely integrable systems too.

\begin{definition}\label{CI}
A vector field $\Gamma$ is said to be complete nilpotent integrable if there exists an Abelian algebra $\mathfrak{h}_\Gamma$ of symmetries of $\Gamma$, with $\Gamma \in \mathfrak{h}_\Gamma$, such that
$$
\mathfrak{h}_\Gamma \supseteq \mathcal{N}' \, .
$$
\end{definition}

Let us assume, consistently with our previous assumptions, that $\Gamma$ is completely nilpotent integrable and that $\mathfrak{h}_\Gamma$ is finitely generated; we denote by $Y_\mu$ a family of generators.

\begin{theorem}  Let $\Gamma$ be a complete nilpotent integrable dynamical system with a finitely generated Abelian algebra of symmetries $\mathfrak{h}_\Gamma$ such that $\mathcal{N}' \subset \mathfrak{h}_\Gamma$. Then there exists a family of higher-order constants of the motion $g_\mu$ such that
$$
\Gamma = \sum_\mu g_\mu Y_\mu \, ,
$$
and conversely.
\end{theorem}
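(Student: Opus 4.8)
The plan is to read $\mathfrak{h}_\Gamma$ as a finitely generated $\mathcal{F}$--module with generators $Y_\mu$, so that the hypothesis $\Gamma\in\mathfrak{h}_\Gamma$ already gives a decomposition $\Gamma=\sum_\mu g_\mu Y_\mu$ with \emph{a priori} coefficients $g_\mu\in\mathcal{F}$; the entire content of the direct implication is then to show that these coefficients can be chosen inside $\mathcal{N}$. The conversion tool will be the bicommutant identity $\mathcal{N}=\mathcal{N}''$, valid under the standing regularity/formal smoothness hypotheses (the same circle of facts that underlies $\mathcal{F}d\mathcal{N}=\Omega^1$ in the previous theorem): it states that $f\in\mathcal{N}$ precisely when $Y(f)=0$ for every $Y\in\mathcal{N}'$. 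Hence I would reduce the whole forward direction to proving that $Y(g_\mu)=0$ for all $Y\in\mathcal{N}'$.

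The heart of the matter is a single commutator computation. Fix any $Y\in\mathcal{N}'$; since $\mathcal{N}'\subseteq\mathfrak{h}_\Gamma$, the element $Y$ is a symmetry, so $[Y,\Gamma]=0$, and since $\mathfrak{h}_\Gamma$ is Abelian, $[Y,Y_\mu]=0$ for every $\mu$. Applying $[Y,\cdot]$ to the decomposition and using the Leibniz rule for the Lie bracket of derivations, all the $[Y,Y_\mu]$ terms drop out and one is left with
\[
0=[Y,\Gamma]=\sum_\mu Y(g_\mu)\,Y_\mu+\sum_\mu g_\mu\,[Y,Y_\mu]=\sum_\mu Y(g_\mu)\,Y_\mu .
\]
Thus the commutation of $\Gamma$ with its Abelian symmetries forces the derivatives $Y(g_\mu)$ to assemble into a trivial $\mathcal{F}$--combination of the generators.

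From this relation I would extract the coefficientwise vanishing by working on the regular (open dense) locus where $\mathfrak{h}_\Gamma$ has constant rank: there one discards redundant generators so that the surviving $Y_\mu$ form a local frame, whence the displayed identity yields $Y(g_\mu)=0$ for each surviving $\mu$ (setting the coefficients of the discarded generators to zero). Since $Y(g_\mu)$ is continuous and the regular locus is dense, $Y(g_\mu)=0$ holds everywhere, so each $g_\mu$ is annihilated by all of $\mathcal{N}'$ and therefore $g_\mu\in\mathcal{N}''=\mathcal{N}$. I expect the main obstacle to be exactly this passage from one relation among the $Y_\mu$ to the coefficientwise statement $Y(g_\mu)=0$: when the generating family is not $\mathcal{F}$--free, the vector $(Y(g_\mu))_\mu$ could be absorbed into the module of relations, so one must either restrict to a frame on the regular part and control the behaviour across the singular locus, or invoke local freeness/projectivity of $\mathfrak{h}_\Gamma$ in the regular case together with the closedness of $\mathcal{N}$ in $\mathcal{F}$ to globalize the $\mathcal{N}$--valued preimage of $\Gamma$.

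The converse is essentially definitional and I would dispatch it in one step: if $\Gamma=\sum_\mu g_\mu Y_\mu$ with $g_\mu\in\mathcal{N}\subseteq\mathcal{F}$ and the $Y_\mu$ generate an Abelian algebra of symmetries $\mathfrak{h}_\Gamma$ with $\mathcal{N}'\subseteq\mathfrak{h}_\Gamma$, then the decomposition exhibits $\Gamma$ as an $\mathcal{F}$--linear combination of the generators, so $\Gamma\in\mathfrak{h}_\Gamma$; as $\mathfrak{h}_\Gamma$ is Abelian and contains $\mathcal{N}'$, the requirements of Definition~\ref{CI} are met and $\Gamma$ is complete nilpotent integrable.
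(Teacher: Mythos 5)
Your proof follows essentially the same route as the paper's: decompose $\Gamma=\sum_\mu g_\mu Y_\mu$ from $\Gamma\in\mathfrak{h}_\Gamma$, use $[Y,\Gamma]=0$ together with the commutativity of $\mathfrak{h}_\Gamma$ to force $Y(g_\mu)=0$ for all $Y\in\mathcal{N}'$, and conclude $g_\mu\in\mathcal{N}$. The only difference is that you explicitly flag and patch the two points the paper passes over in silence --- extracting the coefficientwise vanishing from $\sum_\mu Y(g_\mu)Y_\mu=0$ when the $Y_\mu$ are not an $\mathcal{F}$--free frame, and the identification of functions annihilated by $\mathcal{N}'$ with elements of $\mathcal{N}$ via $\mathcal{N}''=\mathcal{N}$ --- which is a gain in rigor but not a different argument.
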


\begin{proof}   Since $\Gamma \in \mathfrak{h}_\Gamma$, then $\Gamma = \sum_\mu a_\mu Y_\mu$ with $a_\mu\in\mathcal{F}$.   Now, for any derivation $Y$ in the symmetry algebra of $\Gamma$, we have that $[Y, \Gamma ] = 0$ implies that $Y(a_\mu) = 0$.    Hence, the elements $a_\mu$ are annihilated by all derivations in the symmetry algebra of $\Gamma$, in particular by all elements in $\mathfrak{h}_\Gamma$, more specifically by all derivations tangent to the projection $Y \to N$; then they are in $\mathcal{N}$.
\end{proof}

If $\theta_\mu$ are local functions parametrizing the flows of $X_\mu$, then we have that
$$
\Gamma (\theta_\mu) = g_\mu \, ,
$$
and $\Gamma (g_\mu)$ are lower order constants of the motion whose evolution will be described by Eqs. \eqref{flow}.

Clearly, this picture reduces exactly to the case of completely integrable hamiltonian systems in the case when
$$
\mathfrak{h}_\Gamma \supseteq \mathcal{C}' \supseteq \mathcal{N}' \, .
$$

\section{Generalized algebraic reduction of nilpotent integrable systems}\label{sec:reduction_etc}

In this section, we will discuss a generalized reduction procedure, adapted to the framework of nilpotent integrability, that extends the well--known reduction techniques for vector fields on manifolds.   The discussion of generalized algebraic reduction will be done as before on the realm of differentiable algebras.  Later on, particular instances of our construction and the relation with well--known reduction theorems will be discussed.

If we are given a dynamics $\Gamma$ on a differentiable manifold $M$, we can reduce it in two ways: first by restricting it to an invariant submanifold and secondly, by quotienting it with respect to a suitable projection map.  Both procedures are perfectly illustrated in the classical example of a hamiltonian flow with symmetry, first restricted to an invariant submanifold defined by level sets of its constants of the motion and then quotiented with respect to the projection map defined by the existence of cyclic variables.

Hereafter we shall describe these two approaches to the reduction in the setting of differentiable algebras, because it makes the approaches independent on the existence of auxiliary geometrical structures.

\subsection{Restriction to invariant subspaces}\label{sec:res_iv}

Given a differentiable algebra $\mathcal{F}$ with spectrum $X = \spec_\R (\mathcal{F})$, an affine differentiable subspace $Y\subset X$ is determined by a closed ideal $\mathcal{J}$ of the algebra $\mathcal{F}$.  Thus if  $\mathcal{F}$ is the differentiable algebra $\mathcal{F}(M)$ of smooth functions on the manifold $M$ and  $S\subset M$ is a regular submanifold, then the set $\mathcal{J}_S$ of functions vanishing at $S$ defines a closed ideal of $\mathcal{F}(M)$ and $\mathcal{F}(M) / \mathcal{J}_S \cong \mathcal{F}(S)$. In turn, any closed ideal $\mathcal{J}$ of $\mathcal{F}$ defines a differentiable algebra, given by $\mathcal{F} / \mathcal{J}$.   Elements in the quotient space $\mathcal{F} / \mathcal{J}$ will be denoted by $f + \mathcal{J}$ or simply by $[f]$, unless a specific realization in terms of functions on a given set is provided.
\begin{definition}
Given a dynamics $\Gamma$ on $\mathcal{F}$ and a closed ideal $\mathcal{J}$ of $\mathcal{F}$, we shall say that $\mathcal{J}$ is $\Gamma$--invariant if $\Gamma(f) \in \mathcal{J}$ for any $f \in \mathcal{J}$. \end{definition}
In this case, the derivation $\Gamma$ induces a derivation $[\Gamma]$ on the algebra $\mathcal{F} / \mathcal{J}$
by defining $[\Gamma](f + \mathcal{J}) = \Gamma(f) + \mathcal{J}$.  Such a derivation will be said to be \textit{the restriction of $\Gamma$ to} $\mathcal{F}/\mathcal{J}$.
It is clear that if the $\Gamma$--invariant  ideal $\mathcal{J}$ is the ideal of functions vanishing on the submanifold $S$, then $[\Gamma]$ defines a dynamics on the submanifold $S$ which is nothing but the restriction of the vector field $\Gamma$ to $S$.

\subsection{Quotienting by invariant relations}

The algebraic analog of an equivalence relation on a manifold $M$ is a subalgebra of the algebra of functions $\mathcal{F}$.  In fact, let us assume that $\sim$ is an equivalence relation on $M$ such that the quotient space $\widetilde{M} = M/\sim$ is a manifold and the canonical projection $\pi \colon M \to \widetilde{M}$ is a submersion.  Then the pull--back with respect to $\pi$ of the differentiable algebra of smooth functions on $\widetilde{M}$ determines a subalgebra $\pi^*\mathcal{F}(\widetilde{M})$ of $\mathcal{F}(M)$.

Conversely, if $\tilde{\mathcal{F}} \subset \mathcal{F}(M)$ is a subalgebra, we can define an equivalence relation on points of $M$ by declaring that $x \sim y$ iff $f(x) = f(y)$ for all $f \in \tilde{\mathcal{F}}$.    Given a differentiable algebra $\mathcal{F}$, not every subalgebra is a differentiable algebra itself, however this is true for closed subalgebras.  Then if $\tilde{\mathcal{F}}$ is a closed subalgebra of $\mathcal{F}$ and $\tilde{X} = \spec_\R (\tilde{\mathcal{F}})$, $X = \spec_\R(\mathcal{F})$ are the corresponding spectra, there is a natural continuous projection $\pi\colon X \to \tilde{X}$ such that $\pi^* = i$ is the canonical inclusion of $\tilde{\mathcal{F}}$ into $\mathcal{F}$.
As in the situation above of restriction by an invariant subspace, if we are given a dynamics $\Gamma$ and a closed subalgebra $\widetilde{\mathcal{F}}$, we shall say that it is $\Gamma$--invariant if $\Gamma (f)  \in \widetilde{\mathcal{F}}$ for all $f \in \widetilde{\mathcal{F}}$.  In this case, the restriction of $\Gamma$ to $\widetilde{\mathcal{F}}$ defines a derivation of the subalgebra $\widetilde{\mathcal{F}}$.  This restriction, denoted by $\widetilde{\Gamma}$, will be called the quotient of $\Gamma$ with respect to the subalgebra $\widetilde{\mathcal{F}}$.

If  $\widetilde{\mathcal{F}} = \pi^*\mathcal{F}(\widetilde{M})$, as in the example above, then the vector field $\Gamma$ is $\pi$--projectable and induces a vector field $\tilde{\Gamma}$ on $\tilde{M}$ iff $\tilde{\mathcal{F}}$ is $\Gamma$--invariant. The vector field $\tilde{\Gamma}$ is just the quotient of $\Gamma$ with respect to $\pi^*\mathcal{F}(\tilde{M})$.

Notice that a given equivalence relation, determined by a subalgebra $\tilde{\mathcal{F}}$, does not have to be compatible with a given subspace determined by an ideal $\mathcal{J}$.  We shall say that the equivalence relation $\tilde{\mathcal{F}}$ is compatible with the subspace $\mathcal{J}$ if $\mathcal{J}\subset \tilde{\mathcal{F}}$.

\subsection{Generalized reduction}\label{sec:gen_red}

Both procedures discussed above, restriction to a subspace and quotienting by an equivalence relation, can be combined, i.e. we can restrict the dynamics to a given subspace and quotienting it out by an equivalence relation.    In fact we can proceed in two different ways: first we restrict with respect to the ideal $\mathcal{J}$ and then quotient with respect to the closed subalgebra $\tilde{\mathcal{F}}$, or viceversa first we quotient with respect to the subalgebra $\tilde{\mathcal{F}}$ and then we restrict with respect to a given ideal $\mathcal{J}$.    The first procedure leads to a dynamics $[\tilde{\Gamma}]$ on the algebra $\tilde{\mathcal{F}}/(\tilde{\mathcal{F}} \cap \mathcal{J})$ and the second gives a dynamics $\widetilde{[\Gamma]}$ on $(\tilde{\mathcal{F}}+ \mathcal{J})/ \mathcal{J}$. Indeed,  the smallest equivalence relation containing $\mathcal{\tilde{F}}$ and compatible with the subspace defined by $\mathcal{J}$ is the subalgebra $\mathcal{\tilde{F}} + \mathcal{J}$.  The ideal $\mathcal{J}$ is an ideal of $\mathcal{\tilde{F}} + \mathcal{J}$, hence  $\tilde{\mathcal{F}}+ \mathcal{J}/ \mathcal{J}$ is an algebra.   The later reduction in a geometrical form can be traced back to the Marsden--Weinstein reduction theorem in the realm of Hamiltonian systems with symmetry \cite{Ma74}; the former, in an algebraic form, traces back to Grabowski--Marmo reduction theorem for Poisson manifolds \cite{Ga94}.   Consequently, we will call the first reduced dynamics $[\tilde{\Gamma}]$ the geometric Marsden--Weinstein reduction and the second one $\widetilde{[\Gamma]}$, the algebraic Grabowski--Marmo reduction.
Both reduction schemes lead to the same dynamics as is proven in the following theorem.

\begin{theorem}\label{en_red_iso}   Let $\mathcal{F}$ be a differentiable algebra with a unit element and $\Gamma$ be a derivation.  Let $\mathcal{J}$ be a closed $\Gamma$--invariant ideal and $\mathcal{\tilde{F}}$ be a closed $\Gamma$--invariant differentiable subalgebra of $\mathcal{F}$.

The reduced subalgebras $\tilde{\mathcal{F}}/\tilde{\mathcal{J}} \cap \mathcal{J}$ and $\tilde{\mathcal{F}}+ \mathcal{J}/ \mathcal{J}$ are isomorphic differentiable algebras.   Moreover, the reduced dynamics are transformed into each other by the previous isomorphism, that is $[\tilde{\Gamma}] \cong\widetilde{[\Gamma]}$.

\end{theorem}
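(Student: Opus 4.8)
The plan is to recognise the statement as a differentiable-algebra refinement of the classical second isomorphism theorem, and to construct the isomorphism explicitly. First I would consider the map
\[
\Phi \colon \widetilde{\mathcal{F}} \to (\widetilde{\mathcal{F}} + \mathcal{J})/\mathcal{J}, \qquad \Phi(a) = a + \mathcal{J},
\]
obtained by composing the inclusion $\widetilde{\mathcal{F}} \hookrightarrow \widetilde{\mathcal{F}} + \mathcal{J}$ with the canonical projection onto the quotient by $\mathcal{J}$ (which is an ideal of $\widetilde{\mathcal{F}} + \mathcal{J}$, as already remarked before the statement). This $\Phi$ is manifestly an algebra morphism. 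It is surjective, because every class in $(\widetilde{\mathcal{F}} + \mathcal{J})/\mathcal{J}$ has a representative $a + j$ with $a \in \widetilde{\mathcal{F}}$, $j \in \mathcal{J}$, and $a + j + \mathcal{J} = a + \mathcal{J} = \Phi(a)$; and its kernel consists of those $a \in \widetilde{\mathcal{F}}$ lying in $\mathcal{J}$, i.e. $\ker \Phi = \widetilde{\mathcal{F}} \cap \mathcal{J}$. Passing to the quotient then yields a bijective algebra morphism
\[
\bar{\Phi} \colon \widetilde{\mathcal{F}}/(\widetilde{\mathcal{F}} \cap \mathcal{J}) \to (\widetilde{\mathcal{F}} + \mathcal{J})/\mathcal{J}, \qquad \bar{\Phi}\bigl(a + (\widetilde{\mathcal{F}} \cap \mathcal{J})\bigr) = a + \mathcal{J}.
\]

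Next I would upgrade this purely algebraic bijection to an isomorphism of \emph{differentiable} algebras, which is where the real work lies. Since $\widetilde{\mathcal{F}}$ is a closed subalgebra and $\mathcal{J}$ is closed, $\widetilde{\mathcal{F}} \cap \mathcal{J}$ is a closed ideal of $\widetilde{\mathcal{F}}$, so the domain is a differentiable algebra; and $\mathcal{J}$ closed in the (closed) subalgebra $\widetilde{\mathcal{F}} + \mathcal{J}$ makes the codomain one as well. The projection $\Phi$ is continuous for the strong (Whitney) topology, hence so is $\bar{\Phi}$ by the universal property of the quotient Fr\'echet topology. The delicate point is continuity of the inverse: one must check that the two Fr\'echet quotient topologies match under $\bar{\Phi}$. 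I expect the cleanest route to be the open mapping theorem for Fr\'echet spaces — both quotients are Fr\'echet and $\bar{\Phi}$ is a continuous linear bijection, so its inverse is automatically continuous — together with the observation that a topological algebra isomorphism of differentiable algebras is an isomorphism in the category of differentiable spaces, inducing a homeomorphism of the real spectra. \textbf{This topological step is the main obstacle}, precisely because the algebraic isomorphism theorem is formal, whereas here one is forced to control the Fr\'echet structure and, implicitly, the closedness of $\widetilde{\mathcal{F}} + \mathcal{J}$ needed to give the codomain its differentiable-algebra structure.

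Finally I would verify that $\bar{\Phi}$ intertwines the two reduced derivations. By construction the restriction $\widetilde{\Gamma}$ of $\Gamma$ to $\widetilde{\mathcal{F}}$ and the restriction $[\Gamma]$ of $\Gamma$ to $\mathcal{F}/\mathcal{J}$ both act as ``$\Gamma$ followed by the relevant projection'', and $\Gamma$-invariance of $\mathcal{J}$ and of $\widetilde{\mathcal{F}}$ guarantees that $[\widetilde{\Gamma}]$ and $\widetilde{[\Gamma]}$ are well defined. For $a \in \widetilde{\mathcal{F}}$ one then computes
\[
\bar{\Phi}\bigl([\widetilde{\Gamma}](a + \widetilde{\mathcal{F}} \cap \mathcal{J})\bigr) = \bar{\Phi}\bigl(\Gamma(a) + \widetilde{\mathcal{F}} \cap \mathcal{J}\bigr) = \Gamma(a) + \mathcal{J} = \widetilde{[\Gamma]}(a + \mathcal{J}) = \widetilde{[\Gamma]}\bigl(\bar{\Phi}(a + \widetilde{\mathcal{F}} \cap \mathcal{J})\bigr).
\]
Since the classes $a + (\widetilde{\mathcal{F}} \cap \mathcal{J})$ with $a \in \widetilde{\mathcal{F}}$ exhaust the domain, this proves $\bar{\Phi} \circ [\widetilde{\Gamma}] = \widetilde{[\Gamma]} \circ \bar{\Phi}$, i.e. the two reduced dynamics correspond under the isomorphism, $[\widetilde{\Gamma}] \cong \widetilde{[\Gamma]}$, completing the argument.
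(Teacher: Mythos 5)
Your proof is correct and follows essentially the same route as the paper: the paper also constructs the map $\Phi(\tilde f)=\tilde f+\mathcal{J}$, identifies its kernel as $\tilde{\mathcal{F}}\cap\mathcal{J}$, invokes the isomorphism theorems, and verifies the intertwining of the reduced derivations by the same one-line computation. The only difference is that you explicitly address the Fr\'echet-topological issues (continuity of the inverse via the open mapping theorem and the closedness of $\tilde{\mathcal{F}}+\mathcal{J}$), which the paper passes over in silence; this is a welcome refinement rather than a divergence in method.
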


\begin{proof}  The isomorphism among the algebras $\mathcal{J}\cap \tilde{\mathcal{F}}/\mathcal{\tilde{F}}$ and $\mathcal{\tilde{F}} + \mathcal{J} /\mathcal{J}$ is provided by the second isomorphy theorem. We introduce the map
$$ \Phi (\tilde{f}) = \tilde{f} + \mathcal{J} ,$$
whose kernel is  $ \mathcal{J}\cap \tilde{\mathcal{F}}$; the first isomorphy theorem provides the conclusion we were looking for.  We will denote the isomorphism so constructed by $[\Phi]$.

Finally, let us observe that
$$[\Phi]_*(\tilde{[\Gamma]})(\tilde{f}+ \mathcal{J}) = \tilde{\Gamma}(\tilde{f} + \mathcal{J}\cap \tilde{\mathcal{F}}) = \tilde{\Gamma}(\tilde{f}) + \mathcal{J}\cap \tilde{\mathcal{F}} = [\tilde{\Gamma}](\tilde{f} + \mathcal{J}\cap \tilde{\mathcal{F}} ) .$$

\end{proof}

In what follows, since the algebraic Marsden-Weinstein and Grabowski-Marmo reduced algebras are isomorphic as well as the corresponding reduced dynamics, any of the isomorphic spaces and dynamics obtained in this way will be denoted by $\mathcal{F}\reduction_{\mathcal{J}} \mathcal{\tilde{F}}$ and the induced dynamics by $\Gamma_{\mathcal{J},\tilde{\mathcal{F}}}$.

\subsection{Reduction of nilpotent integrable systems}\label{sec:reduc_nilpotent}

We shall discuss now what happens to the integrability properties of  a system under algebraic reduction, namely how do the constants of the motion and the nilpotent algebras behave under algebraic reduction.   We shall focus only on the simpler situations, namely the extension to nilpotent integrable systems of Jacobi's procedure of elimination of nodes, Poincar\'e's reduction of order, or reduction by cyclic variables in classical mechanics \cite{Sm70}.

Consider a $\Gamma$-invariant family $\mathcal{G} \subset \mathcal{N}$.  If $g\in \mathcal{G}$, then $\Gamma(g) \in \mathcal{G}$, i.e., all descendants of the higher-order constant of the motion $g$ must belong to $\mathcal{G}$.  Thus the tower $T(g)$ starting at $g$ must be contained in $\mathcal{G}$.   Hence a $\Gamma$-invariant family $\mathcal{G}$ in the nilpotent algebra is the union of towers $T(g)$ of higher-order constants of the motion.   Because of this, we will also say that the family of higher-order constants of the motion $\mathcal{G}$ is nilpotent complete.

Let $\mathcal{J}$ be a closed ideal generated by a $\Gamma$-invariant  family of higher-order constants of the motion $\mathcal{G} = \{ g_\alpha \}$.   It is clear that such an ideal is $\Gamma$-invariant because the tower $T(g_\alpha) \subset \mathcal{G} $, i.e. $\Gamma (g_\alpha) \in \mathcal{G}$ for any $g_\alpha \in \mathcal{G}$.
The restricted algebra defined by $\mathcal{J}$ is given by $\mathcal{F}/\mathcal{J}$.

Let us consider a subalgebra $\widetilde{\mathcal{F}} \subset \mathcal{F}$ defined by a family of Abelian symmetries $\mathfrak{s} \subset \mathfrak{h}_\Gamma$, i.e., $f\in \widetilde{\mathcal{F}}$ if $X_\mu (f) = 0$ for all $X_\mu\in\mathfrak{s}$.   We shall denote the previous subalgebra as $\widetilde{\mathcal{F}} = \mathcal{F}^\mathfrak{s}$.

The following statement follows easily from the theory developed above.

\begin{theorem}  Let $\Gamma$ be a nilpotent integrable system on the differentiable algebra $\mathcal{F}$ with Cartan algebra $\mathfrak{h}_\Gamma$.   Let $\mathcal{J}_\mathcal{G} = \mathcal{F}\mathcal{G}$ be an ideal generated by a nilpotent complete family of higher-order constants of the motion $\mathcal{G}$, and $\mathfrak{s} \subset \mathfrak{h}_\Gamma$ a family of Abelian symmetries of $\Gamma$.  The reduced dynamical system $\Gamma_{\mathcal{J},\tilde{\mathcal{F}}}$ defined on the reduced algebra $\mathcal{F}\reduction_{\mathcal{J}_\mathcal{G}} \mathcal{F}^\mathfrak{s}$ defined by the ideal $\mathcal{J}_\mathcal{G}$ and the subalgebra $\mathcal{F}^\mathfrak{s}$ is nilpotent integrable.
\end{theorem}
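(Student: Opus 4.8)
The plan is to realize the reduction as an algebra homomorphism intertwining the two dynamics and then transport the defining data of nilpotent integrability across it. First I would record that both reduction data are compatible with the flow. The ideal $\mathcal{J}_\mathcal{G}$ is $\Gamma$--invariant because $\mathcal{G}$ is nilpotent complete, so that $\Gamma(g_\alpha)\in\mathcal{G}\subset\mathcal{J}_\mathcal{G}$ for every generator; and the subalgebra $\mathcal{F}^\mathfrak{s}$ is $\Gamma$--invariant because $\mathfrak{s}\subset\mathfrak{h}_\Gamma$ is Abelian with $\Gamma\in\mathfrak{h}_\Gamma$, so for $X\in\mathfrak{s}$ and $f\in\mathcal{F}^\mathfrak{s}$ one has $X(\Gamma f)=\Gamma(Xf)+[X,\Gamma](f)=0$. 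By Theorem \ref{en_red_iso} I may realize the reduced object as $\mathcal{F}^\mathfrak{s}/(\mathcal{F}^\mathfrak{s}\cap\mathcal{J}_\mathcal{G})$, together with the surjective quotient homomorphism $\pi\colon\mathcal{F}^\mathfrak{s}\to\mathcal{F}\reduction_{\mathcal{J}_\mathcal{G}} \mathcal{F}^\mathfrak{s}$ which, by the very construction of the restricted and quotient derivations, satisfies $\pi\circ\Gamma=\Gamma_{\mathcal{J},\tilde{\mathcal{F}}}\circ\pi$ on $\mathcal{F}^\mathfrak{s}$.

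The core mechanism is that $\pi$ carries higher--order constants of the motion to higher--order constants of the motion: if $f\in\mathcal{F}^\mathfrak{s}$ satisfies $\lie{\Gamma}^k f=0$, then $\Gamma_{\mathcal{J},\tilde{\mathcal{F}}}^{\,k}\pi(f)=\pi(\lie{\Gamma}^k f)=0$, so $\pi\bigl(\mathcal{N}(\Gamma)\cap\mathcal{F}^\mathfrak{s}\bigr)\subseteq\mathcal{N}(\Gamma_{\mathcal{J},\tilde{\mathcal{F}}})$. In the \emph{strict} special case $\mathcal{N}(\Gamma)=\mathcal{F}$ this finishes the argument at once: then $\mathcal{N}(\Gamma)\cap\mathcal{F}^\mathfrak{s}=\mathcal{F}^\mathfrak{s}$ and $\pi$ maps it onto the entire reduced algebra, so $\mathcal{N}(\Gamma_{\mathcal{J},\tilde{\mathcal{F}}})=\mathcal{F}\reduction_{\mathcal{J}_\mathcal{G}} \mathcal{F}^\mathfrak{s}$; in particular $\mathcal{F}\,d\mathcal{N}=\Omega^1$ for the reduced data and the criterion for nilpotent integrability established above applies. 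For the general hypothesis, where $\Gamma$ is completely nilpotent integrable with Cartan algebra $\mathfrak{h}_\Gamma$, the inclusion above controls the reduced nilpotent algebra, and it remains to supply a reduced Cartan algebra.

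For this I would produce $\mathfrak{h}_{\Gamma_{\mathcal{J},\tilde{\mathcal{F}}}}$ by descent. Choosing generators $Y_\mu$ of $\mathfrak{h}_\Gamma$, those that in addition preserve $\mathcal{J}_\mathcal{G}$ descend to derivations of the reduced algebra; since $\mathfrak{h}_\Gamma$ is Abelian these commute among themselves and with $\Gamma$, while the members lying in $\mathfrak{s}$ act trivially on $\mathcal{F}^\mathfrak{s}$ and hence project to zero. The resulting Abelian algebra of reduced symmetries, completed to a maximal one, is the candidate $\mathfrak{h}_{\Gamma_{\mathcal{J},\tilde{\mathcal{F}}}}$, and it contains $\Gamma_{\mathcal{J},\tilde{\mathcal{F}}}$ by construction. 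By Definition \ref{CI} the proof is then reduced to the single inclusion $\mathcal{N}(\Gamma_{\mathcal{J},\tilde{\mathcal{F}}})'\subseteq\mathfrak{h}_{\Gamma_{\mathcal{J},\tilde{\mathcal{F}}}}$.

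The hard part is precisely this last inclusion, namely transferring the annihilator of the nilpotent algebra across the quotient by $\mathcal{F}^\mathfrak{s}\cap\mathcal{J}_\mathcal{G}$ and across the passage to $\mathfrak{s}$--invariants simultaneously; a derivation of the reduced algebra need not lift to $\mathcal{F}$, and killing $\pi(\mathcal{N}\cap\mathcal{F}^\mathfrak{s})$ is a priori weaker than killing all of $\mathcal{N}(\Gamma)$. The strategy I would follow is local: using the formal smoothness of the reduced spectrum, the local freeness of $\Omega^1$ and partitions of unity (as in the proof of Lemma \ref{lemma1}), work in the $\Gamma$--adapted coordinates of Section \ref{sec:structure_flow}, in which $\mathcal{N}$ is generated by the towers $T(g)$ of higher--order constants and $\mathcal{N}'$ is spanned by the transverse ``angle'' derivations. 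In these coordinates the reduction fixes exactly the values of the tower generators in $\mathcal{G}$ and removes exactly the angles dual to $\mathfrak{s}$, so that the annihilator of the surviving tower generators is manifestly spanned by the surviving angle derivations, i.e. by the descent of $\mathfrak{h}_\Gamma$; lifting a reduced annihilating derivation to such a coordinate representative, invoking $\mathcal{N}(\Gamma)'\subseteq\mathfrak{h}_\Gamma$ and projecting back then yields the required inclusion. The delicate points to monitor are the global consistency of this coordinate picture across the strata of the reduced differentiable space and the behaviour at the possibly singular points introduced by the restriction, which is where the differentiable--space formalism rather than a naive manifold argument is essential.
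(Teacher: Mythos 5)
You set the problem up correctly: the invariance of $\mathcal{J}_\mathcal{G}$ and of $\mathcal{F}^\mathfrak{s}$, the identification of the reduced algebra via Theorem \ref{en_red_iso}, the fact that the quotient map sends higher-order constants of the motion to higher-order constants of the motion, and the observation that in the strict case $\mathcal{N}=\mathcal{F}$ the argument closes immediately. But for the general (complete) case your proof stops exactly where the work begins. You explicitly defer the inclusion $\mathcal{N}(\Gamma_{\mathcal{J},\tilde{\mathcal{F}}})'\subseteq\mathfrak{h}_{\Gamma_{\mathcal{J},\tilde{\mathcal{F}}}}$ to a ``local coordinates plus lifting'' strategy whose delicate points (global consistency across strata, non-liftability of reduced derivations, singular points) you name but do not resolve. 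That inclusion is the entire content of the theorem under Definition \ref{CI}, so as written this is a genuine gap, not a routine verification left to the reader.

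The paper closes this gap with a short commutant computation, treating the two reduction ingredients separately, and the tool you are missing is the Stability Lemma (Lemma \ref{stability2}) together with the order-reversal of commutants. For the restriction by $\mathcal{J}_\mathcal{G}$: the nilpotent algebra of $[\Gamma]$ contains $\mathcal{N}/(\mathcal{N}\cap\mathcal{J}_\mathcal{G})$, and since enlarging a set shrinks its commutant, $\mathcal{N}_{\mathcal{J}_\mathcal{G}}'\subseteq(\mathcal{N}/(\mathcal{N}\cap\mathcal{J}_\mathcal{G}))'\subseteq[\mathfrak{h}_\Gamma]\subseteq\mathfrak{h}_{[\Gamma]}$; no lifting of derivations is needed. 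For the quotient by $\mathcal{F}^\mathfrak{s}$: one has $\widetilde{\mathcal{N}}=\mathcal{N}\cap\mathcal{F}^\mathfrak{s}$, hence $\widetilde{\mathcal{N}}'=\mathcal{N}'+\widetilde{\mathcal{F}}'=\mathcal{N}'+\mathfrak{s}''$; the first summand lies in $\mathfrak{h}_\Gamma$ by hypothesis, and the second does too because $\mathfrak{s}$ Abelian forces $\mathfrak{s}''$ to be Abelian and to contain $\mathfrak{s}$, so maximality of $\mathfrak{h}_\Gamma$ gives $\mathfrak{s}''\subseteq\mathfrak{h}_\Gamma$. This is precisely the mechanism that absorbs the ``extra'' annihilating derivations you were worried about, and it replaces your entire local-coordinate programme with two lines of algebra. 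If you want to salvage your route, you would at minimum have to prove the identity $\widetilde{\mathcal{N}}'=\mathcal{N}'+\mathfrak{s}''$ (or the one inclusion actually needed) by your coordinate argument, which is harder than simply invoking the commutant calculus directly.
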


\begin{proof}   Due to the fact that we are in the generalized reduction scheme described in previous Sections \ref{sec:res_iv}-\ref{sec:gen_red}, we may use Theorem  \ref{en_red_iso}.

We check first that the reduction of $\Gamma$ to the invariant subspace defined by the ideal $\mathcal{J}_\mathcal{G}$ is nilpotent integrable.
It is easy to ascertain that the nilpotent algebra $\mathcal{N}_{\mathcal{J}_\mathcal{G}}$ of the restriction $[\Gamma]$ of $\Gamma$ with respect to the ideal $\mathcal{J}_\mathcal{G}$ contains the nilpotent algebra $\mathcal{N}$ restricted to $\mathcal{F}/\mathcal{J}_\mathcal{G}$.  Then
\begin{equation}\label{N'_restric}
\mathcal{N}_{\mathcal{J}_\mathcal{G}}' \subset (\mathcal{N}/(\mathcal{N}\cap \mathcal{J}_\mathcal{G})' \, .
\end{equation}
On the other hand, maximal Abelian subalgebras $\mathfrak{h}_{[\Gamma]}$ of the restricted derivation $[\Gamma ]$ are larger than the restricted Abelian subalgebra $[\mathfrak{h}_\Gamma]$.  Since the system is nilpotent integrable, we have $\mathcal{N}' \subset \mathfrak{h}_\Gamma$. Then, due to Eq. \eqref{N'_restric}, we get
$$(\mathcal{N}/(\mathcal{N}\cap \mathcal{J}_\mathcal{G})'  \subset [\mathfrak{h}_\Gamma] \, .$$
Hence we conclude that
$$
\mathcal{N}_{\mathcal{J}_\mathcal{G}}' \subset [\mathfrak{h}_\Gamma] \subset \mathfrak{h}_{[\Gamma]} \, ,
$$
and the restricted system $[\Gamma]$ is nilpotent integrable.

Consider now the quotient system $\widetilde{\Gamma}$ defined on the subalgebra $\widetilde{F} = \mathcal{F}^\mathfrak{s}$.    We have that the nilpotent algebra $\widetilde{\mathcal{N}}$  of $\widetilde{\Gamma}$ is just $\mathcal{N}\cap \mathcal{F}^\mathfrak{s}$, i.e., $ \widetilde{\Gamma} = \widetilde{\Gamma}^\mathfrak{s} = \mathcal{N} \cap \widetilde{\mathcal{F}}$. At the same time, because $\Gamma$ is nilpotent integrable, we deduce that $\mathcal{N}' \subset \mathfrak{h}_\Gamma$, hence
$$
\widetilde{\mathcal{N}}' = \mathcal{N}' + \widetilde{\mathcal{F}}' = \mathcal{N}' + \mathfrak{s}'' \, .
$$
Observe that since $\widetilde{\mathcal{F}} = \mathcal{F}^\frak{s}$, then $\widetilde{\mathcal{F}}$ is the invariant set $\mathfrak{s}'$; therefore $\widetilde{\mathcal{F}}' = \mathfrak{s}''$.  But because of the stability Lemma \ref{stability2}, $\mathfrak{s}$ being Abelian implies that $\mathfrak{s}''$ is Abelian too and $\mathfrak{s} \subset \mathfrak{s}''$.   Now, $\mathfrak{h}_\Gamma$ is a maximal Abelian algebra of $\Gamma$, so $\mathfrak{s} \subset \mathfrak{h}_\Gamma$ implies that $\mathfrak{s}'' \subset \mathfrak{h}_\Gamma$.  We conclude that
$$
\widetilde{\mathcal{N}}' \subset \mathfrak{h}_\Gamma = \mathfrak{h}_{\widetilde{\Gamma}} \, ,
$$
and the system is nilpotent integrable.
\end{proof}

\section{Some explicit classes of nilpotent systems. The higher-order Calogero-Moser systems}\label{sec:examples}

\subsection{Linear systems}  A simple instance of derivations is provided by derivations on $\mathcal{C}^\infty (\R^n)$ which are linear with respect to
the linear structure on $\R^n$.

If $\Gamma$ is a derivation on $\mathcal{C}^\infty (\R^n)$, it is linear if $[\Gamma , \Delta ]= 0$, where $\Delta$ is the derivation whose flow is given by $\varphi_t (x) = e^t x$.  In global linear coordinates $x^i$ on $\R^n$, the derivation $\Delta$ is the dilation vector field $\Delta = x^i \partial / \partial x^i$.
If $\Gamma$ is linear, there exists a linear  map $A \colon \R^n \to \R^n$ such that $\Gamma = X_A = A_i^j x^i \partial /\partial x^j$, where $X_A$ is the derivation whose flow is given by $\psi_t (x) = e^{tA}(x)$.    The algebra $\mathcal{C}^\infty(\R^n)$ is generated by any collection of linear functions $f_\alpha (x) = \langle\alpha,x\rangle$, $\alpha \in (\R^n)^*$, containing a generating set $\alpha_1, \ldots, \alpha_n$ for $(\R^n)^*$.

A simple computation shows that $\lie{\Gamma} f_\alpha = f_{A^*\alpha}$, where $A^*$ is the adjoint map to $A$.    Hence the algebra of constants of the motion $\mathcal{C}$ for $\Gamma$ is generated by the linear functions $f_\alpha$ where $\alpha \in \ker A^*$.    One can see that the real spectrum of $\mathcal{C}$ is given by $\ker A$ and the natural projection is given by $\pi \colon \R^n \to \ker A$, defined as $\pi (x) \alpha = \langle \alpha, x \rangle$, for all $\alpha \in \ker A^* = (\ker A)^*$.  Alternatively, we can choose a basis $\alpha_1, \ldots, \alpha_r$ of $\ker A^*$. Then the map $\pi$ can also be written as $\pi = f_{\alpha_i} v^i $ where $v^i$ is the dual basis to $\alpha_i$.

In a similar way we obtain the nilpotent algebra $\mathcal{N}$ of $\Gamma$ as the subalgebra generated by the linear functions $f_\alpha$ where $(A^*)^k \alpha = (A^k)^* \alpha = 0 $ for some $k$.  Thus if $A$ is nilpotent, then $\mathcal{N} = \mathcal{C}^\infty(\R^n)$ and the system is strictly nilpotent integrable (and it  possesses a generating set of linear higher-order constants of the motion).

In the case of an arbitrary linear system $\Gamma = X_A$, we may use the Jordan decomposition of $A$ in its semisimple and nilpotent part, i.e., $A = S + N$ where $S$ is diagonalizable, $N$ is nilpotent and $[S,N] = 0$.  Then, the algebra generated by the powers of $S$ defines a Cartan subalgebra for $\Gamma$ and the system is trivially nilpotent integrable.


\subsection{Reduction of higher-order free systems}

\subsubsection{Reduction of the free system of order two: The Calogero-Moser system}

As in the Introduction, let $\mathbf{r}(t)=(x(t),y(t),z(t))$ be the coordinates of a particle moving freely in $\mathbf{R}^3$, i.e., satisfying $\ddot{\mathbf{r}}= 0$.
Then we may define a $2\times 2$ real symmetric matrix
$$
X(t)=\begin{pmatrix}
x + z & y \\ y & x - z\end{pmatrix}
$$
that satisfies the equivalent system
\begin{equation}
\ddot{X}=0 \, .
\end{equation}

As we already discussed,  we can reduce this system in order to obtain new dynamical systems exhibiting non-trivial potential terms.   We can formulate abstractly this situation by considering the tangent space $M = T\mathcal{H}$ to the space of $2\times 2$ real symmetric matrices $\mathcal{H}$.  The points in $M$ will be denoted by $(X,V)$, $X,V \in \mathcal{H}$.   Consider now the second order system on $\mathcal{H}$ defined by $\ddot{X} = 0$. Its flow is given by $X(t) = X_0 + V_0t$, $X_0,V_0 \in \mathcal{H}$.   Notice that the vector field corresponding to such system is
\begin{equation}\label{free_motion}
\Gamma = V \frac{\partial}{\partial X} \, ,
\end{equation}
with the obvious matrix multiplication product understood.

We may write the matrix $X \in \mathcal{H}$ as:
\begin{equation}\label{defX}
X(t)=   x \mathbb{I} + y\sigma_1 + z \sigma_3 \, ,
\end{equation}
in terms of the matrices $\sigma_k$, $k = 1,2,3$ \footnote{Notice that $\sigma_2$ here is $i$-times the second Pauli matrix.}:
$$
\sigma_1 = \begin{pmatrix} 0 &1 \\ 1 & 0 \end{pmatrix}\, , \quad  \sigma_2 = \begin{pmatrix} 0 &1 \\ -1 & 0 \end{pmatrix} \, , \quad \sigma_3 = \begin{pmatrix} 1 & 0 \\ 0 & -1 \end{pmatrix} \, .
$$
The group $U(1) = SO(2)$, realized by matrices
\begin{equation*}
G =\begin{pmatrix} \cos \varphi & \sin \varphi \\ -\sin \varphi & \cos \varphi\end{pmatrix} \, ,
\end{equation*}
acts by conjugation on $\mathcal{H}$, $X \mapsto GXG^{-1}$, and transforms any matrix $X$ of the form \eqref{defX} into
a diagonal matrix:
\begin{equation}\label{GQG}
X = G Q G^{-1} = G \begin{pmatrix} q_1 & 0\\ 0Ê& q_2 \end{pmatrix}G^{-1} \, .
\end{equation}
We can also write the previous relation as
$$
X = G M^{(0)}G^{-1} \, , \qquad \mathrm{with} \qquad M^{(0)} = Q\,,
$$
for later convenience.
Let $q= q_2 - q_1$ denote the difference between the eigenvalues of $Q$.
The infinitesimal generator $\tau$ of $U(1)$ determines a basis for its Lie algebra:
$$
\tau = G^{-1} \dot{G} = \dot{\varphi} \sigma_2 \, .
$$

Below we list some useful commutation relations:
\begin{eqnarray}
&& [\sigma_1, \sigma_2] = -2\sigma_3\, , \quad [\sigma_2, \sigma_3] = -2\sigma_1 \, ,\label{rel_1}  \\
&& [\sigma_1, Q] = q \sigma_2\, , \quad [\sigma_2, Q] = q \sigma_1 \, , \quad [\sigma_2, \dot{Q}] = \dot{q} \sigma_1 \, ,  \label{rel_2} \\
&& [\sigma_2, G] = 0 \, ,\qquad [\sigma_1, G] = -2\sigma_3  \sin \varphi \, , \qquad [\sigma_3, G] = 2\sigma_1 \sin \varphi  \, . \label{rel_3} \\
&& [\tau, Q] = q\dot{\varphi} \sigma_1 \, , \quad [\tau, \dot{Q}] = \dot{q}\dot{\varphi} \sigma_1 \, , \quad [\tau,\sigma_1] = 2\dot{\varphi} \sigma_3 \, . \label{rel_4}
\end{eqnarray}



Computing the first derivative of a curve $X(t)$ using the factorizacion Eq. \eqref{GQG}, leads to:
$$
\dot{X}= G\big(\dot{Q}+[\tau,Q]\big)G^{-1}  = GM^{(1)} G^{-1}\, .
$$
with
\begin{equation}\label{M1}
M^{(1)} = \dot{Q} + [\tau, Q] = \dot{M}^{(0)} + [\tau, M^{(0)}] \, .
\end{equation}
Then, we get immediately:
\begin{equation}\label{Xdotdot}
\ddot{X} = G\big(\dot{M}^{(1)}+[\tau,M^{(1)}]\big)G^{-1} = GM^{(2)} G^{-1}
\end{equation}
with
\begin{equation}\label{M2}
M^{(2)} = \dot{M}^{(1)} + [\tau, M^{(1)}]  \, .
\end{equation}
Iterating the computation, we get
\begin{equation}\label{recursion}
X^{(k+1)} = GM^{(k+1)}G^{-1} \, , \qquad M^{(k+1)} = \dot{M}^{(k)} + [\tau, M^{(k)}] \, , \quad k \geq 0 \, .
\end{equation}

Notice that Eq. \eqref{Xdotdot}, exhibits explicitly the $U(1)$ symmetry of the system $\ddot{X} = 0$ and provides its equivalent formulation
$$
 \dot{M}^{(1)} = - [\tau, M^{(1)}] \, ,
$$
which is also a Lax representation for it.
Equivalently, substituting the expression for $M^{(1)}$ given by Eq. \eqref{M1} we get
$$
\ddot{Q} = - 2q\dot{\varphi}\sigma_3 + (\ddot{\varphi} + 2\dot{q}\dot{\varphi})\sigma_1 \,.
$$
This relation, by working the commutators out (using Eqs. \eqref{rel_2}) and taking into account that the matrix $Q$ is diagonal, leads to the system
\begin{eqnarray*}
\begin{cases}
\ddot{Q}=&-2q\dot{\varphi}^2\sigma_3\label{eq1}\\
0=&q\ddot{\varphi} +2\dot{q}\dot{\varphi}\label{eq2} \, .
\end{cases}
\end{eqnarray*}
These three equations can be explicitly written as (using $\tilde{q}=q_1+q_2$):
\begin{equation*}
\begin{cases}
\ddot{\tilde{q}}=&0 \\
\ddot{q} =&4q\dot{\varphi}^2 \\[5pt]
\ddot{\varphi}=&- \dfrac{2\dot{q}}{q} \dot{\varphi}. \label{39}
\end{cases}
\end{equation*}
The third equation can be easily reduced by a quadrature to
$$
\dot{\varphi}=\frac{C}{q^2} \, ,
$$
where $C$ is a constant.  In fact, this equation can be viewed as the construction of an invariant $C:=q^2\dot{\varphi}$, and the previous system becomes:
\begin{equation}
\begin{cases}
\ddot{\tilde{q}}=&0 \\
\ddot{q} =& \dfrac{4C^2}{q^3} \\[10pt]
\ddot{\varphi}=&-\dfrac{2\dot{q}}{q}\dot{\varphi}=-\dfrac{2C\dot{q}}{q^3} \, .
\end{cases}
\end{equation}

However, thinking in the higher-order situation, we prefer to proceed emphasizing the role of constants of the motion and the generalized reduction scheme discussed in Section \ref{sec:reduc_nilpotent}.
Clearly the quantity $L=[X,\dot{X}]$, representing the angular momentum, is a constant of the motion:
$$
\dot{L}=\frac{\mathrm{d}}{\mathrm{d}t}[X,\dot{X}]=[X,\ddot{X}]=0.
$$
In coordinates $(q,\varphi)$ we get:
\begin{equation}\label{L}
L=[X,\dot{X}]= - q^2\dot{\varphi}\sigma_2=-\ell\sigma_2 \, ,\quad \ell=\frac12\mathrm{tr}(L\sigma_2) \, ,
\end{equation}
where $\ell$ is the third component of the angular momentum, and
\begin{eqnarray}
L\sigma_2=q^2\dot{\varphi}I_2,\quad \dot{\varphi}=\frac{1}{2q^2} \mathrm{tr}(L\sigma_2)=\frac{\ell}{q^2}.
\end{eqnarray}
We may use any level set of the function $\ell$ to define an invariant subspace.  The restriction to it of the vector field $\Gamma$, Eq. \eqref{eq1} reads
\begin{eqnarray*}
\begin{cases}
\ddot{Q}=& -\dfrac{2\ell^2}{q^3} \sigma_3 \\
\ddot{\varphi}=&-\dfrac{2\ell\dot{q}}{q^3} \, .
\end{cases}
\end{eqnarray*}
It shows that the vector field $\partial /\partial \varphi$ defines an Abelian symmetry algebra of $\Gamma$. Quotienting with respect to it we finally get, in terms of the coordinates $q_1,q_2$, the classical Calogero-Moser system:
\begin{equation}
\begin{cases}
\ddot{q}_1=& \displaystyle{\frac{2\ell^2}{(q_1-q_2)^3}},\\[15pt]
\ddot{q}_2=& \displaystyle{\frac{2\ell^2}{(q_2-q_1)^3}}.
\end{cases}
\end{equation}

%

\subsubsection{Reduction of the system $\dddot{X}=0$:  The third-order Calogero-Moser system}
We will discuss now various reductions of the third order system $\dddot{X} = 0$ that represents an uniformly accelerated system.

By analogy with the previous discussion, the system is defined in the space $M = T^2\mathcal{H}$, whose points will be denoted by $(X,V,A)$, $X,V,A \in \mathcal{H}$.  The vector field associated with it is
$$
\Gamma = V \frac{\partial}{\partial X} + A \frac{\partial}{\partial V} \, .
$$
We may certainly perform a reduction of the system by means of the constant of the motion $\ddot{X}= A =$ const.  However, we prefer to perform a different reduction that would mimic the angular reduction discussed before for the system $\ddot{X}= 0$.  Besides, in doing so we will show how to proceed to treat systematically higher order `free' systems of the form $X^{(k)} = 0$, $k > 3$.  Notice that the system $\Gamma$ is trivially nilpotent integrable with flow $X(t) = X_0 + V_0 t + \frac12 A t^2$.

Keeping with the same notation as before, the recursion equations \eqref{recursion} provide
$$
\dddot{X} = G M^{(3)} G^{-1} \, \qquad \mathrm{with} \quad  M^{(3)} = \dot{M}^{(2)} + [\tau, M^{(2)}] \,.
$$
Thus, the $U(1)$-invariance of the system leads to the reduced system
\begin{equation}\label{dotM2}
\dot{M}^{(2)} = - [\tau, M^{(2)}]  \, ,
\end{equation}
with $M^{(2)}$ given by Eq. \eqref{M2}. Before computing the explicit expression of the system \eqref{dotM2},
we will discuss the role of higher order constants of the motion associated to the angular momentum in this context.

The angular momentum $L=[X,\dot{X}]$ is not anymore a first order constant of the motion:
$$
\dot{L}=\frac{\mathrm{d}}{\mathrm{d}t}[X,\dot{X}]=[X,\ddot{X}]\neq 0 \, .
$$
However,
\begin{equation}
\ddot{L}=\frac{\mathrm{d}^2}{\mathrm{d}t^2}[X,\dot{X}] = [\dot{X},\ddot{X}] \, ,
\end{equation}
and, finally,
\begin{equation}
\dddot{L}=\frac{\mathrm{d}^3}{\mathrm{d}t^3}[X,\dot{X}] =0 \, ,
\end{equation}
thus $L$ is a third-order constant of the motion.   The quantity $L$ defines the tower of higher-order constants of the motion $L^{(3)} = L, L^{(2)} = \dot{L}$ and $L^{(1)} = \ddot{L}$, where the upper index indicates the order of the constant of the motion as in Section \ref{sec:structure_flow}.  More generally, if we consider the equation $X^{(k)}=0$, the quantity $L$ will be a constant of the motion of order $2k-3$ (except for $n=1$ where $L=0$).

As we know from the previous discussion, Eq. \eqref{L} gives
$L=-q^2\dot{\varphi} \sigma_2 $.  Then it is easy to ascertain that all the derivatives $L^{(k)}$ are proportional to $\sigma_2$. Then, multiplying by $\sigma_2$ we can consider the quantity $L^{(k)}\sigma_2$, which is a multiple of the identity. Taking their traces we obtain the scalar functions
\begin{eqnarray}
\ell_1 = \frac12 \mathrm{Tr}(L\sigma_2)&=& q^2\dot{\varphi}  \, ,
\\
\ell_2 = \dot{\ell_1} = \frac12 \mathrm{Tr}(\dot{L}\sigma_2)&=&2q\dot{q}\dot{\varphi}+ q^2\ddot{\varphi}  \, ,
\\
\ell_3 = \dot{\ell_2} = \frac12 \mathrm{Tr}(\ddot{L}\sigma_2)&=&
2\dot{q}^2\dot{\varphi}
+2q\ddot{q}\dot{\varphi}
+4q \dot{q}\ddot{\varphi}
+q^2\dddot{\varphi}.   \label{angm}
\end{eqnarray}
As in the second-order case, $\ell_3$ is the third component of the generalized `angular momentum $L^{(1)} = \ddot{L}$ and $\dot{\ell_3} = 0$. We can integrate the relations $\dot{\ell_2}= \ell_3$, and $\dot{\ell_1} = \ell_2$ to obtain:
\begin{eqnarray*}
\ell_1 &=& \frac12 \ell_{30} t^2+\ell_{20} t+\ell_{10} \, ,
\\
\ell_2 &=& \ell_{30} t + \ell_{20} \, , \\
\ell_3 &=& \ell_{30} \, ,
\end{eqnarray*}
with $\ell_{10}$, $\ell_{20}$ and $\ell_{30}$, are initial values for the quantities $\ell_1$, $\ell_2$ and $\ell_3$ respectively. At this stage, we can use the tower of angular constants of the motion $\ell_k$ to compute $M^{(k)}$.
First observe that
$$
M^{(1)} = \dot{Q} + [\tau, Q] = \dot{Q} +q\dot{\varphi} \sigma_1 = \dot{Q} + \frac{\ell_1}{q}\sigma_1 \, ,
$$
and
$$
M^{(2)} = \dot{M}^{(1)} + [\tau, M^{(1)}] = \ddot{Q} + 2 \frac{\ell_1^2}{q^3} \sigma_3  +  \frac{\ell_2}{q}\sigma_1  \, .
$$
It can be written, for purposes that will become clear in the subsequent computations, in the form
$$
M^{(2)}  = \ddot{Q} + M^{(2)}_3 \sigma_3 + M^{(2)}_1 \sigma_1 \, ,
$$
with $M^{(2)}_3 = 2 \ell_1^2 / q^3$ and $M^{(2)}_1 = \ell_2 / q$.
Hence we get for $M^{(3)}$ the expression
\begin{eqnarray}
M^{(3)} &=& \dot{M}^{(2)} + [\tau, M^{(2)}] \label{M3}  \\ &=& \dddot{Q} + \left( \dot{M}^{(2)}_3+ 2\dot{\varphi}M^{(2)}_1 \right)\sigma_3 + \left( \dot{\varphi}\ddot{q} +  \dot{M}^{(2)}_1  - 2\dot{\varphi} M^{(2)}_3 \right) \sigma_1 \nonumber  \\
\label{M3}  &= & \dddot{Q} + M^{(3)}_3 \sigma_3 + M^{(3)}_1 \sigma_1 \, , \nonumber \end{eqnarray}
with
\begin{eqnarray*}
M^{(3)}_3 &=& \dot{M}^{(2)}_3+ 2\dot{\varphi}M^{(2)}_1 = 6 \frac{\ell_1\ell_2}{q^3} - 6\frac{\ell_1^2\dot{q}}{q^4} \, , \label{M33}\\
M^{(3)}_1 &=& \dot{\varphi}\ddot{q} + \dot{M}^{(2)}_1  -2\dot{\varphi} M^{(2)}_3 =\frac{\ell_3}{q} + \frac{\ell_1 \ddot{q}- \ell_2\dot{q}}{q^2} - 4 \frac{\ell_1^3}{q^5} \, . \label{M31}
\end{eqnarray*}
Hence, as $Q$ is diagonal, the relation $M^{(3)} = 0$ provides the system of equations
\begin{equation}
\left\{  \begin{array}{l}
\dddot{Q} = -\displaystyle{6 \frac{\ell_1\ell_2 q - \ell_1\dot{q}}{q^4}}\sigma_3 \, ,\\
0 =  \displaystyle{\frac{\ell_3}{q} + \frac{\ell_1 \ddot{q} - \ell_2\dot{q}}{q^2} - 4 \frac{\ell_1^3}{q^5} } \, .
 \end{array}\right.
\end{equation}
Thus we have obtained from the generalized angular momentum reduction of the third-order free system, a new integrable system.
\begin{definition}
The system of equations in $\mathbb{R}^2$
\begin{equation}\label{3rdCM}
\left\{  \begin{array}{l}
\displaystyle{\dddot{q}_1 = \displaystyle{6 \frac{\ell_1\ell_2 (q_1-q_2) - \ell_1(\dot{q}_1-\dot{q}_2)}{(q_2-q_1)^4}}}\, ,\\
\displaystyle{\dddot{q}_2 = \displaystyle{6 \frac{\ell_1\ell_2 (q_2-q_1) - \ell_1(\dot{q}_2-\dot{q}_1)}{(q_2-q_1)^4}}} \, .\end{array}\right.
\end{equation}
will be called the third-order Calogero-Moser system.
\end{definition}
This system is nilpotent integrable because it has been obtained by reduction from the nilpotent integrable system $\dddot{X} = 0$.

Other reductions of the original system can be obtained by using a different tower of constants of the motion.  For instance, we may have used the `energy function'
$$
E = \frac12 \dot{X}^2 \, ,
$$
hence,
$$
\dot{E} = \dot{X} \cdot \ddot{X} \, ,\qquad \ddot{E} = \ddot{X}^2 \, , \qquad \dddot{E} = 0 \, .
$$
Let
$$
e_1 = \mathrm{Tr\,} (E) = \frac12 (\dot{q}_1^2 + \dot{q}_2^2) + q^2 \dot{\varphi} \, .
$$
Then, using the tower of functions $e_1$, $e_2 = \dot{e}_1$, $e_3 = \dot{e}_2$ ($\dot{e}_3 = 0$), we can obtain different reductions of the original system, all of them nilpotent integrable.

There are further reductions of the third-order Calogero-Moser system that finally will take us back to the second-order Calogero-Moser system and, in addition, become Hamiltonian.   The simplest way to do that is to use the constant of the motion $A = \ddot{X}$.   Denoting by $c_1 = \frac12 (A_{11}+ A_{22}) $, $c_2 = A_{12}$, and $c_3 =  \frac12 (A_{11}- A_{22})$, we deduce that
the Hamiltonian function
\begin{equation*}
H =\frac12(\dot{q}_1^2+ \dot{q}_2^2) +\frac{2\ell_1^2}{(q_1-q_2)^2}-\frac{c_1}{2}(q_1+q_2) + \frac12
(q_2-q_1) (-c_2\sin2\varphi+c_3 \cos 2 \varphi)
\end{equation*}
is a constant of the motion.  This Hamiltonian has been obtained from the obvious Hamiltonian for the system $\ddot{X} = A$, namely
$$
H = \frac12 (\dot{x}^2 + \dot{y}^2 + \dot{z}^2) -(c_1x+c_2y+c_3z),
$$
using the coordinates $q_1,q_2, \varphi$ as before and the angular constants of the motion.

However we cannot reduce this Hamiltonian further,  because it is not invariant with respect to the Abelian symmetry $\partial /\partial \varphi$, except when $c_2=c_3=0$.  In this case, the equations of motion Eq. \eqref{3rdCM} reduce to the simple form
\begin{equation}
\begin{cases}
\ddot{q}_1=& \displaystyle{\frac{2\ell_1^2}{(q_1-q_2)^3}}+ c_1,\\[15pt]
\ddot{q}_2=&-\displaystyle{\frac{2\ell_1^2}{(q_1-q_2)^3}}+ c_1.
\end{cases}
\end{equation}
which is just the Calogero-Moser system with a constant acceleration term.

\subsubsection{The fourth-order Calogero-Moser system}
We will conclude the discussion on the reduction of higher-order free systems by considering the angular reduction of the fourth-order system $X^{(4)} = 0$.
Proceeding again as in the third-order case, we get:
$$
X^{(4)} = G M^{(4)} G^{-1} \, ,\quad \mathrm{with} \quad M^{(4)} = \dot{M}^{3} + [\tau, M^{(3)}] \, .
$$
By using the structure of $M^{(3)}$ given by equation Eq. \eqref{M3}, we obtain
\begin{eqnarray*}
M^{(4)} &=& Q^{(4)} + \dot{M}^{(3)}_3 \sigma_3 + \dot{M}^{(3)}_1 \sigma_1 + [\tau,\dddot{Q}]+
M^{(3)}_3[\tau, \sigma_3] + M^{(3)}_1 [\tau,\sigma_1 ] \\
&=&  Q^{(4)} + M^{(4)}_3 \sigma_3 + M^{(4)}_1 \sigma_1
\end{eqnarray*}
with the recursive equations:
\begin{eqnarray*}
M^{(4)}_3 &=& \dot{M}^{(3)}_3+ 2\dot{\varphi}M^{(3)}_1  \, , \\
M^{(4)}_1 &=& \dot{\varphi}\dddot{q} + \dot{M}^{(4)}_1  - 2\dot{\varphi} M^{(4)}_3
\end{eqnarray*}
Thus a simple computation taking into account the explicit expressions for $M_3^{(3)}$ and $M_1^{(3)}$ given in Eqs. \eqref{M33}-\eqref{M31}, leads us to
\begin{eqnarray*}
M^{(4)} &=& Q^{(4)} + \left( \frac{6\ell_2^2 + 8\ell_1\ell_3}{q^3} - \frac{4\ell_1^2\ddot{q} + 32 \ell_1\ell_2 \dot{q}}{q^4} + \frac{8\ell_1^2\dot{q}^2}{q^5}- \frac{8\ell_1^4}{q^7}\right)  \sigma_3 \\
&+& \left(\frac{\ell_4}{q} + 2 \frac{\ell_1\dddot{q} - \ell_3\dot{q}}{q^2}  - 3\frac{\ell_1\dot{q}\ddot{q} - \ell_2\dot{q}^2}{q^3} - 14 \frac{\ell_1^2\ell_2}{q^5} - 18 \frac{\ell_1^3\dot{q}}{q^6}\right)  \sigma_1 \, .
\end{eqnarray*}
This gives us the fourth-order system
\begin{equation}
Q^{(4)} =  -\frac{6\ell_2^2 + 8\ell_1\ell_3}{q^3} + \frac{4\ell_1^2\ddot{q} + 32 \ell_1\ell_2 \dot{q}}{q^4} -  \frac{8\ell_1^2\dot{q}^2}{q^5} + \frac{8\ell_1^4}{q^7} \,.
\end{equation}

\section{Discussion and Future Perspectives}

In this work, we have tried to identify the main features which are needed to provide an explicit description of the flow of a given dynamical system and that do not rely on any additional geometrical structure.   The nilpotent algebra consisting of  the family of all higher-order constants of the motion, together with a maximal Abelian algebra of symmetries of the given dynamics, are sufficient to realize this description, provided that they satisfy a natural relation between them, i.e., that the maximal Abelian algebra contains the annihilator of the nilpotent algebra.    Natural examples, like higher-order free motion, that do not exhibit any obvious geometrical structure associated to them, fall in this category.
We have named these systems the nilpotent integrable ones, since their flows can be written as a polynomial in the time parameter.

Such systems generalize in a natural way the notion of Hamiltonian completely integrable systems. They exhibit canonical forms and the theory of reduction applies nicely to them. Also, we have shown how to construct examples of them without recurring to any additional geometrical structure like a symplectic or a Poisson one.  The algebraic language offered by the theory of differentiable algebras has proved to be particularly well adapted to make manifest the conceptual content of the notion of nilpotent integrability.   It is important to point out that our approach allows to treat dynamics on spaces exhibiting mild singularities.

To avoid to make this paper inordinately large, we do not have discussed how the general ideas exposed here particularize when we consider additional geometrical structures, or when such structures can be determined from the integrability properties of a given system.    Neither we have discussed the set theoretical counterpart of the theory, that is, we have not made any attempt to translate the results contained in this paper in the geometrical language of smooth manifolds, vector fields, etc.

Other issues of primary interest that can be treated in the present theoretical framework are for instance superintegrable systems and, in a different vein, the extension of the notion of integrability to non-commutative spaces.
These subjects will all be discussed in subsequent papers.


\end{document}